\documentclass[a4paper,UKenglish]{lipics}

\newif\ifFullyDynamic
\FullyDynamictrue

\newif\ifShort
\Shortfalse

\usepackage{amssymb,amsmath}
\setcounter{tocdepth}{3}
\usepackage{graphicx}
\usepackage{xspace}
\usepackage{wrapfig}
\usepackage{comment}

\usepackage{tikz}
\usetikzlibrary{calc,intersections,through,backgrounds}
\usetikzlibrary{arrows, decorations.pathreplacing}

\newcommand{\mytitle}{Dynamic Conflict-Free Colorings in the Plane\footnote{MdB and AM are supported by the Netherlands' Organisation for Scientific Research (NWO) under project no.~024.002.003.}}
\newcommand{\mytitleshort}{\mytitle}

\title{\mytitle}
\titlerunning{\mytitleshort} 

\newcommand{\reals}{\mathbb{R}\xspace}
\newcommand{\Reals}{\reals}
\newcommand{\naturals}{\mathbb{N}\xspace}
\newcommand{\etal}{\emph{et~al.}\xspace}

\newcommand{\col}{\mathit{col}}
\newcommand{\cf}{\gamma_{\mathrm{um}}}
\newcommand{\C}{\mathcal{C}}    

\newcommand{\objects}{S\xspace}

\newcommand{\ob}{s}
\newcommand{\qu}{Q}    
\newcommand{\sq}{s}    
\newcommand{\family}{\mathcal{F}\xspace}

\newcommand{\claim}[2]{\vspace{6pt} \noindent $\triangleright$ {\sffamily Claim.}{\it #1} \\[3pt]
{\sffamily Proof.} #2 \hfill $\triangleleft$ \vspace{6pt}}

\newcommand{\fact}[1]{\vspace{6pt} \noindent $\triangleright$ {\sffamily Fact. }{\it #1} \\[3pt]}

\newcommand{\mypara}[1]{\vspace{10pt} \noindent \textbf{\sffamily #1}}

\renewcommand{\leq}{\leqslant}
\renewcommand{\geq}{\geqslant}
\newcommand{\tree}{\mathcal{T}}
\DeclareMathOperator{\height}{height}
\newcommand{\myroot}{\mathit{root}}

\newcommand{\eps}{\varepsilon}

\newcommand{\myleft}{\mathit{left}}
\newcommand{\myright}{\mathit{right}}

\newcommand{\myne}{\mbox{\scriptsize\sc ne}}
\newcommand{\myse}{\mbox{\scriptsize\sc se}}
\newcommand{\mynw}{\mbox{\scriptsize\sc nw}}
\newcommand{\mysw}{\mbox{\scriptsize\sc sw}}

\newcommand{\tleftv}{\tree_{\myleft(v)}}
\newcommand{\trightv}{\tree_{\myright(v)}}
\newcommand{\tleftw}{\tree_{\myleft(w)}}
\newcommand{\trightw}{\tree_{\myright(w)}}
\newcommand{\rmax}{r_{\max}}

\newcommand{\smax}{\sq_{\max}}
\newcommand{\smin}{\sq_{\min}}

\makeatletter
\newcommand\niton{\mathrel{\m@th\mathpalette\canc@l\owns}}
\newcommand\canc@l[2]{{\ooalign{$\hfil#1/\mkern1mu\hfil$\crcr$#1#2$}}}
\makeatother

\newif\ifcomments
\commentstrue
\ifcomments
    \newcommand{\mdb}[1]{\textcolor{blue}{[Mark: #1]}}
    \newcommand{\am}[1]{\textcolor{red}{[Aleks: #1]}}
\else
    \newcommand{\mdb}[1]{}
    \newcommand{\am}[1]{}
\fi


\definecolor{nicegreen}{rgb}{0,0.7,0.3}
\definecolor{niceyellow}{rgb}{0.9,0.7,0.07}
\definecolor{nicepurple}{rgb}{0.5,0.2,0.8}

\graphicspath{{./figures/}}
\ifShort \bibliographystyle{abbrv}
\else \bibliographystyle{plain}
\fi

\author[1]{Mark de Berg}
\author[1]{Aleksandar Markovic}
\affil[1]{TU Eindhoven, the Netherlands}
\authorrunning{M.\, de Berg, A.\, Markovic} 

\Copyright{Mark de Berg, Aleksandar Markovic}

\subjclass{F.2.2 Nonnumerical Algorithms and Problems}
\keywords{Conflict-free colorings, Dynamic data structures}


\begin{document}

\maketitle



\begin{abstract}
We study dynamic conflict-free colorings in the plane,
where the goal is to maintain a conflict-free coloring
(CF-coloring for short) under insertions and deletions.
\begin{itemize}
\item First we consider CF-colorings of a set~$\objects$ of
      unit squares with respect to points. Our method maintains a CF-coloring
      that uses $O(\log n)$ colors at any time, where $n$ is the current
      number of squares in~$\objects$, at the cost of only $O(\log n)$
      recolorings per \ifShort insertion or deletion \else
      insertion or deletion of a square.\fi
      We generalize the method to rectangles whose sides have lengths
      in the range $[1,c]$, where $c$ is a fixed constant. Here the number of used colors
      becomes $O(\log^2 n)$.
      The method also extends to arbitrary rectangles whose coordinates come
      from a fixed universe of size $N$, yielding $O(\log^2 N \log^2 n)$
      colors. The number of recolorings for both methods stays in~$O(\log n)$.
\item We then present a general framework to maintain a CF-coloring under insertions
      for sets of objects that admit a unimax coloring with a small number of colors in the static case.
      As an application we show how to maintain a CF-coloring with $O(\log^3 n)$ colors
      for disks (or other objects with linear union complexity)
      with respect to points at the cost of $O(\log n)$ recolorings per insertion.
      \ifFullyDynamic
        We extend the framework to the fully-dynamic case when the static unimax coloring
        admits weak deletions. As an application we show how to maintain a CF-coloring
        with $O(\sqrt{n} \log^2 n)$ colors for points with respect to rectangles,
        at the cost of $O(\log n)$ recolorings per insertion
        and $O(1)$ recolorings per deletion.
      \fi
\end{itemize}
These are the first results on fully-dynamic CF-colorings in the plane,
and the first results for semi-dynamic CF-colorings for non-congruent
objects.
\end{abstract}


\section{Introduction}
Consider a set of base stations in the plane that can be used for mobile communication.
To ensure a good coverage, the base stations are typically positioned in such a way
that the communication ranges of different base stations overlap.
However, if a user is within range of several base stations using the
same frequency, then interference occurs and the communication
is lost. Therefore, we want to assign frequencies to the base
stations such that any user within range of at least one base station,
is also within range of at least one base station
using \ifShort an interference-free frequency. \else
a frequency where no interference occurs.\fi The easy solution would be
to give all stations a different frequency. However, this is
undesirable as the set of available frequencies is limited.
The question then arises: how many different frequencies are needed
to ensure that any user that is within range of at least one base station
has an interference-free base station at his disposal?
Motivated by this and other applications, Even \etal \cite{even-cf-03}
and Smorodinsky \cite{thesis-smorodinsky} introduced the notion of
\emph{conflict-free colorings} or \emph{CF-colorings} for short.
Here the ranges of the base stations are modeled as regions (disks, or other objects)
in the plane, and \ifShort \else different\fi frequencies are
represented by \ifShort \else different\fi colors.
A CF-coloring is now defined as follows.

Let~$\objects$ be a set of objects in the plane.
For a point $q\in \Reals^2$, let $\objects_q := \{ S \in \objects | q \in S \}$
be the subset of objects containing~$q$.
A coloring $\col:\objects \to \naturals$ of the objects in~$\objects$---here
we identify colors with non-negative integers---is said to be
\emph{conflict-free (with respect to points)}
if for each point~$q$ with $\objects_q\neq \emptyset$ there is an object
$S\in \objects_q$ whose color is unique among the objects in~$\objects_q$.
A CF-coloring is called \emph{unimax} when the maximum color
in $\objects_q$ is unique.

We can also consider a dual version of planar CF-colorings.
Here we are given a set $\objects$ of points and a family $\family$ of geometric ranges,
and the goal is to color the points in $\objects$ such
that any range from~$\family$ containing a least one point, contains
a point with a unique color. \ifShort \else
Both versions of CF-colorings in the plane---coloring
objects with respect to points, and coloring points with respect to ranges---can
be formulated as coloring nodes in a hypergraph such that any hyperedge
has a node with a unique color. In this paper we stick
to the more intuitive geometric view. \fi
\medskip

Conflict-free colorings have received
a lot of attention since they were introduced by
Even \etal \cite{even-cf-03} and Smorodinsky \cite{thesis-smorodinsky};
see the overview paper by Smorodinsky \cite{S-survey-10}, which surveys
the work up to 2010. We review the work most relevant to our results.

Even \etal proved that it is always possible to CF-color
a set of disks in the plane using~$O (\log n)$ colors, and
that~$\Omega (\log n)$ colors are needed in the worst case.
The authors extended the result to sets of translates
of any given centrally symmetric polygon. Later, Har-Peled
and Smorodinsky \cite{harpeled-cf-05} further generalized the
result to regions with near-linear union complexity. The
dual version of the problem was also studied by Even \etal
\cite{even-cf-03}; they showed it is possible
to CF-color points using~$O(\log n)$ colors with respect to disks, or
with respect to scaled translations of a centrally
symmetric convex polygon. Moreover, Ajwani \etal \cite{Ajwani2012}
showed how to CF-color points with respect to \ifShort
using~$O(n^{0.382})$ colors. \else rectangles;
the bound however goes up to~$O(n^{0.382})$.\fi

Recall that CF-colorings correspond to interference-free frequency
assignments in a cellular network. When a node in the network fails,
the resulting assignment may no longer be interference-free.
This leads to the study of \emph{$k$-fault-tolerant} CF-colorings,
where we want $\min(k,|\objects_q|)$ objects from $\objects_q$ to have a unique color.
In other words, a $k$-fault-tolerant CF-coloring allows the deletion
of $k$ objects without losing the conflict-free property.
Cheilaris \etal \cite{cheilaris-scf-14} studied the 1-dimensional case,
and presented a polynomial-time algorithm with approximation
ratio~$5-\frac{2}{k}$ for the problem of finding a CF-coloring with a minimum number of colors.
For~$k=1$---that is, the regular CF-coloring---the
algorithm gives a 2-approximation. Horev \etal \cite{hks-cfcms-10} studied the
2-dimensional case and proved a~$O(k \log n)$ bound for
disks and, more generally, regions with near-linear union complexity.

To increase coverage or capacity in a cellular network it may be necessary
to increase the number of base stations. This led Fiat \etal \cite{chen-ocf-07}
to study \emph{online} CF-colorings. Here the objects to be CF-colored
arrive over time, and as soon as an object appears it must receive a
color which cannot be changed later on. For CF-coloring points with respect to
intervals, they proposed a deterministic algorithm using~$O(\log^2 n)$
colors as well as two randomized algorithms, one of which is using at
most~$O(\log n \log \log n)$ colors in expectation and always
producing a valid coloring. Later, Chen~\etal~\cite{chen-ocf-06}
improved the bound with an algorithm using an expected~$O(\log n)$ colors.
\ifShort
Chen~\etal~\cite{cks-ocfc-09} considered the problem of
online CF-coloring of points with respect to geometric ranges.
\else
Chen~\etal~\cite{cks-ocfc-09} considered the 2-dimensional online problem in the dual
setting, i.e., coloring points with respect to geometric ranges.
\fi
They showed
that for ranges that are half-planes, unit disks, or bounded-size rectangles---i.e. rectangles
whose heights and widths all lie in the range $[1,c]$, for some fixed constant~$c$---\ifShort there is \else one
can obtain \fi an online CF-coloring using $O(\log n)$ colors with high probability.
For bounded-size rectangles they also presented a deterministic result
\ifShort \else for
online coloring,\fi using $O(\log^3 n)$ colors.
Bar-Noy~\etal \cite{barnoy-k-degeneracy} provided a general strategy for online CF-coloring
of hypergraphs. Their method uses~$O(k \log n)$ colors with high probability, where~$k$ is the so-called
\emph{degeneracy} of the hypergraph. Their method can for instance be applied for points with respect to half-planes using~$O(\log n)$ colors, which implies~\cite{cks-ocfc-09} the same result for unit disks with respect to points.
They also introduced a deterministic algorithm for points with respect to intervals in $\Reals^1$
if \emph{recolorings} are allowed. Their method
uses at most $n-\log n$ recolorings in total;  they did not obtain a bound
on the number of recolorings for an individual insertion.
Note that the results for online colorings in $\Reals^2$ are rather limited:
for the primal version of the problem---online CF-coloring objects with respect to points---there
are essentially only results for unit disks or unit squares (where the problem
is equivalent to the dual version of coloring points with respect to unit disks
and unit squares, respectively). Moreover, most of the results are randomized.
\medskip

De Berg~\etal~\cite{dynamic-cf-isaac} introduced the \emph{fully dynamic}
variant of the CF-coloring problem, which generalizes and extends the
fault-tolerant and online variants. Here the goal is to maintain a CF-coloring
under insertions and deletions. It is easy to
see that if we allow deletions and we do not
recolor objects, we may need to give each object in $\objects$ its own color. \ifShort \else
(Indeed, any two intersecting objects must have a different color
when all other objects are deleted.) \fi
Using $n$ colors is clearly undesirable. On the other hand,
recoloring all objects after each update---using then
the same number of colors as in the static case---is
not desirable either. Thus the main question is which trade-offs
can we get between the number of colors and the number of recolorings?
De~Berg~\etal proved a lower bound on this trade-off for the 1-dimensional
problem of CF-coloring intervals with respect to points.
(For this case it is straightforward to give a static CF-coloring with only three colors.)
Their lower bound implies that if we \ifShort want
\else insist on using \fi $O(1/\eps)$ colors,
we must sometimes re-color $\Omega(\eps n^{\eps})$ intervals, and that if we allow
only $O(1)$ recolorings we must use $\Omega(\log n /\log\log n)$ colors
in the worst case. They also presented a strategy that uses
$O(\log n)$ colors at the cost of $O(\log n)$ recolorings.
The main goal of our paper is to study fully dynamic CF-colorings
for the 2-dimensional version of the problem.

\mypara{Our contributions.}
In Section~\ref{sec:rect} we give an algorithm for CF-coloring
unit squares using~$O(\log n)$ colors and~$O(\log n)$ recolorings
per update. Note that $\Omega(\log n)$ is a lower bound on the number
of colors for a CF-coloring of unit squares even in the static
case, so the number of colors our fully dynamic method uses is asymptotically optimal.
We also present an adaptation for bounded-size rectangles which
uses~$O(\log^2 n)$
\ifShort colors.
\else colors, while still using~$O(\log n)$ recolorings per update.
\fi
The method also extends to arbitrary rectangles whose coordinates come
from a fixed universe of size $N$, yielding $O(\log^2 N \log^2 n)$
\ifShort colors.
\else colors, still at the cost of $O(\log n)$ recolorings per insertion or deletion.
\fi
\ifShort Both methods still use~$O(\log n)$ recolorings per update. 
\fi
These constitute the first results on fully-dynamic CF-colorings in $\Reals^2$.

\ifFullyDynamic
    In Section~\ref{sec:general}, we give two general approaches
    that can be applied in many cases. The first uses a static coloring
    to solve insertions-only instances.
\else
    In Section~\ref{sec:general}, we present a general approach for
    the semi-dynamic,  insertions-only problem (or, in other words, the online
    version with recolorings).
\fi
It can be applied in settings
where the static version of the problem admits a unimax coloring
with a small number of colors. The method can for example be used
to maintain a CF-coloring for pseudodisks with~$O(\log^3 n)$ colors and~$O(\log n)$
recolorings per update, or to maintain a CF-coloring for fat regions.
This is the first result for the semi-dynamic CF-coloring problem for such objects:
previous online results for coloring objects with respect to points in $\Reals^2$
only applied to unit disks or unit squares.
\ifFullyDynamic
    We extend the method to obtain a fully-dynamic solution,
    when the static solution allows what we call weak deletions.
    We can apply this technique for instance to CF-coloring
    points with respect to rectangles, using~$O(\sqrt{n} \log^2 n)$ colors
    and~$O(\log n)$ recolorings per insertion
    and~$O(1)$ recolorings per deletion.
\fi


\section{Dynamic CF-Colorings for  Unit Squares and Rectangles}\label{sec:rect}
In this section we explain how to color unit squares with~$O(\log n)$
colors and~$O(\log n)$ recolorings per update. We then generalise
this coloring to bounded-size rectangles, and to rectangles
with coordinates from a fixed universe.
We first explain our basic technique
on so-called anchored rectangles.

\subsection{A Subroutine: Maintaining a CF-coloring for Anchored Rectangles}\label{sec:quad}
We say that a rectangle $r$ is \emph{anchored} if its bottom-left vertex lies at the origin.
Let $\objects$ be a set of $n$ anchored rectangles.
We denote the $x$- and $y$-coordinate of the top-right vertex of a
rectangle~$r$ by $r_x$ and $r_y$, respectively. Our CF-coloring of $\objects$
is based on an augmented red-black tree, as explained next.
\medskip

To simplify the description we assume that the $x$-coordinates of the top-right vertices
(and, similarly, their $y$-coordinates) are all distinct---extending
the results to degenerate cases is straightforward.
We store $\objects$ in a red-black tree $\tree$ where $r_x$ (the $x$-coordinate
of the top-right vertex of~$r$) serves as the key of the rectangles~$r\in \objects$.
It is convenient to work with a \emph{leaf-oriented} red-black tree,
where the keys are stored in the leaves of the tree and the internal nodes store
splitting values.\footnote{Such a leaf-oriented red-black tree can be seen as a
regular red-black tree on a set $X’(\objects)$ that contains a splitting value between
any two consecutive keys. Hence, all the normal operations can be done in the standard way.}
We can assume without loss of generality that the splitting values lie strictly in between the keys.

For a node~$v\in\tree$,
let $\tree_v$ denote the subtree rooted at~$v$ and let $\objects(v)$ denote the set of
rectangles stored in the leaves of~$\tree_v$. We augment $\tree$ by storing a rectangle
$\rmax(v)$ at every (leaf or internal) node of $v$, define as follows:
\begin{equation*}
\mbox{$\rmax(v)$  :=  the rectangle $r\in \objects(v)$ that maximizes $r_y$.}
\end{equation*}

Let $\myleft(v)$ and $\myright(v)$ denote the left and right child, respectively, of an internal node~$v$.
Notice that $\rmax(v)$ is the rectangle whose top-right vertex has maximum $y$-value among $\rmax(\myleft(v))$ and $\rmax(\myright(v))$, so $\rmax(v)$ can be found in $O(1)$ time from the information at $v$'s children. Hence,
we can maintain the extra information in $O(\log n)$ time per insertion and deletion~\cite{clrs-ia-09}.

Next we define our coloring function. To this end we define for each rectangle $r\in\objects$ a set
$N(r)$ of nodes in $\tree$, as follows.
\[
N(r) := \{ v \in \tree : \mbox{ $v$ is the leaf storing $r$, or $v$ is an internal node with $\rmax(\myright(v)) = r$} \}.
\]
Observe that $N(r)$ only contains nodes on the search path to the leaf storing~$r$
and that $N(r)\cap N(r')=\emptyset$ for any two rectangles $r,r'\in \objects$.
Let $\height(v)$ denote the height of~$\tree_v$. Thus $\height(v)=0$ when $v$
is a leaf, and for non-leaf nodes~$v$ we have $\height(v) = \max(\height(\myleft(v)),\height(\myright(v))+1$.
We now define the color of a rectangle $r\in\objects$ as
\ifShort $\col(r) := \max_{v\in N(r)} \height(v).$
\else
\[
\col(r) := \max_{v\in N(r)} \height(v).
\]
\fi
Since $N(r)$ always contains at least one node, namely the leaf storing~$r$, this
is a well-defined coloring.
\begin{lemma}\label{le:quadrant-CF-free}
The coloring defined above is conflict-free.
\end{lemma}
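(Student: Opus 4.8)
The plan is to prove the stronger statement that the coloring is \emph{unimax for points}: for every point $q$ with $\objects_q \neq \emptyset$ the rectangle of maximum color in $\objects_q$ is unique. A unique maximum color is in particular a unique color, so this immediately yields the conflict-free property. To this end I fix a point $q$, set $c := \max_{r \in \objects_q} \col(r)$, and suppose for contradiction that two distinct rectangles $r, r' \in \objects_q$ both attain this maximum color~$c$.

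The object to focus on is the node $u$ that is the lowest common ancestor of the two leaves storing $r$ and $r'$. Taking $r$ to be the rectangle of smaller key, its leaf lies in $\tree_{\myleft(u)}$ and the leaf of $r'$ lies in $\tree_{\myright(u)}$, so $r_x < r'_x$, and both are at least $q_x$ since $r,r' \in \objects_q$. As every key in $\tree_{\myright(u)}$ exceeds $r_x \ge q_x$, every rectangle stored in $\tree_{\myright(u)}$ has top-right $x$-coordinate at least $q_x$. In particular $\rho := \rmax(\myright(u))$ satisfies $\rho_x \ge q_x$, and because $r' \in \tree_{\myright(u)}$ we have $\rho_y \ge r'_y \ge q_y$; hence $\rho \in \objects_q$. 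Moreover $u \in N(\rho)$ by definition of $N(\cdot)$, so $\col(\rho) \ge \height(u)$. This forced member $\rho$ of $\objects_q$ is the lever for the contradiction.

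What remains is to compare $c$ with $\height(u)$. If $c < \height(u)$, then $\rho \in \objects_q$ has color at least $\height(u) > c$, contradicting the maximality of~$c$. Otherwise $c \ge \height(u)$. Let $w \in N(r)$ be a node realizing the color of $r$, i.e.\ $\height(w) = c$. Since $N(r)$ lies on the search path to the leaf of $r$, $w$ is an ancestor of that leaf, which sits in $\tree_{\myleft(u)}$; the ancestors of the leaf that are proper descendants of $u$ all have height below $\height(u)$, so an ancestor of height $c \ge \height(u)$ must be $u$ or a proper ancestor of $u$. It cannot be $u$, because $u \in N(r)$ would require $\rmax(\myright(u)) = r$, which is impossible as $r \notin \tree_{\myright(u)}$. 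Hence $w$ is a proper ancestor of $u$ and $c = \height(w) > \height(u)$. Applying the same reasoning to $r'$ shows that its color-realizing node $w'$ is likewise a proper ancestor of $u$. But then $w$ and $w'$ both lie on the path from the root to $u$, so they are comparable; they are distinct because the sets $N(\cdot)$ are pairwise disjoint, and since heights strictly increase toward the root, distinct nodes on one root-to-$u$ path have distinct heights, contradicting $\height(w) = \height(w') = c$. Either way we reach a contradiction, so the maximum color in $\objects_q$ is attained uniquely.

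The step I expect to be most delicate is precisely this last comparison, and the reason is worth isolating: the highest node charging a rectangle need \emph{not} lie below the split node~$u$, since an $\rmax$-representative can remain the right-subtree maximum arbitrarily far up the tree, so one cannot simply assert that $\rho$ outranks both $r$ and $r'$. The case split on whether $c$ is below or above $\height(u)$ is exactly what tames this: when $c < \height(u)$ the forced member $\rho \in \objects_q$ dominates, and when $c > \height(u)$ both color-realizing nodes are squeezed onto the single root-to-$u$ path, where two nodes of equal height cannot coexist. I would double-check only the boundary bookkeeping (that $c = \height(u)$ is genuinely excluded, and that the degenerate $c=0$ case falls under the first branch because $\height(u)\ge 1$), but these follow directly from the argument above.
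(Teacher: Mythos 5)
Your proof is correct, and it takes a genuinely different route from the paper's --- in fact it establishes a stronger statement. The paper argues by structural induction on $\height(v)$ that the coloring of each $\objects(v)$ is conflict-free: for a query point $q$, either $\objects_q(v)$ localizes entirely into one child's subtree (when $q_x > x(v)$, or when $q_x \le x(v)$ and $q_y > \rmax(\myright(v))_y$) and induction applies, or else $q \in \rmax(\myright(v))$ and a separate claim shows that this particular rectangle has a color that is unique in all of $\objects(v)$. Your argument needs no induction: assuming two rectangles of $\objects_q$ share the maximum color $c$, you pass to the LCA $u$ of their leaves, force $\rho = \rmax(\myright(u))$ into $\objects_q$ with $\col(\rho) \ge \height(u)$, and the dichotomy $c < \height(u)$ versus $c > \height(u)$ kills both cases. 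Both proofs ultimately rest on the same two combinatorial facts (the sets $N(\cdot)$ are pairwise disjoint, and distinct nodes on a common root-to-node path have distinct heights), but yours buys a stronger conclusion --- the coloring is unimax, not merely conflict-free --- which is a notion this paper actually cares about, since the general technique of Section~\ref{sec:general} consumes unimax colorings; the paper's induction buys easier reuse, since Lemma~\ref{le:u-squares-CF-free} for unit squares is proved by rerunning the same induction with a case distinction on the quadrant containing $q$, whereas your global LCA argument would likewise have to be redone per quadrant. One sentence of yours needs repair: to exclude $w' = u$ for the second rectangle $r'$, ``the same reasoning'' as for $r$ does not literally apply, because $r'$ \emph{does} lie in $\tree_{\myright(u)}$, so $u \in N(r')$ is entirely possible (exactly when $r' = \rho$). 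The exclusion instead follows from what you derived just before invoking it: you already showed $c > \height(u)$, so the node of height $c$ realizing $\col(r')$ cannot be $u$. All the needed facts are on the table in your write-up, but the justification for $r'$ is different from the one you gave for $r$, and the text should say so.
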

\ifShort
\else 
\begin{proof}
Recall that $\objects(v)$ denotes the set of rectangles stored in the subtree rooted at~$v$.
We prove by induction on $\height(v)$ that the coloring of $\objects(v)$ is conflict-free.
Since $\objects = \objects(\myroot(\tree(\objects)))$, this proves the lemma.
\medskip

When $\height(v)=0$ then $\objects(v)$ is a singleton, which is trivially colored conflict-free.
Now assume $\height(v)>0$. Let $x(v)$ denote the splitting value stored at~$v$,
and consider any point $q :=(q_x,q_y)$ in the plane. Let $\objects_q(v)\subseteq\objects(v)$
denote the set of rectangles from $\objects(v)$ containing~$q$, and assume $\objects_q(v) \neq\emptyset$.

If $q_x>x(v)$ then $q$ does not lie in any of the rectangles in $S(\myleft(v))$,
and so $\objects_q(v) = \objects_q(\myright(v))$. Since the coloring of $\objects(\myright(v))$
is conflict-free by induction, this implies that $\objects_q(v)$ has a rectangle with a unique color.

Now suppose $q_x \leq x(v)$. If $q_y > \rmax(\myright(v))_y$ then $q$ does not
lie in any rectangle from $S(\myright(v))$. Since the coloring of $\objects(\myleft(v))$
is conflict-free, this implies that $\objects_q(v)$ has a rectangle with a unique color.
Otherwise $q\in \rmax(\myright(v))$.
\begin{quotation}
\claim{ The rectangle $\rmax(\myright(v))$ has a unique color in $\objects(v)$ and, hence, in $\objects_q(v)$.}%
{ Let $u$ be the node that defines the color of~$\rmax(\myright(v))$, that is,
the node in $N(\rmax(\myright(v))$ with maximum height.
Since $v\in N(\rmax(\myright(v))$, either $u=v$ or $u$ is an ancestor of~$v$.
Let $r$ be any other rectangle in~$\objects(v)$ and
let $w$ be the node that defines the color of~$r$.
Because $r\in\objects(v)$, we know that $w$ is a node in $\tree_v$ or
an ancestor of~$v$. In the former case, since $N(r)$ is disjoint from
$N(\rmax(\myright(v))$ and hence does not contain~$v$, we conclude
that $\col(r)<\col(\rmax(\myright(v))$.
In the latter case we observe that $u$ and $w$ both are nodes
on the path from the root node to $v$, which means that $\height(u)\neq\height(w)$
and so $\col(r)\neq \col(\rmax(\myright(v)))$.
}
\end{quotation}
We conclude that the coloring is conflict-free.
\end{proof}
We obtain the following theorem.
\fi 
\begin{theorem}
Let $\objects$ be a set of anchored rectangles in the plane. Then it is
possible to maintain a CF-coloring on $\objects$ with $O(\log n)$ colors
using $O(\log n)$ recolorings per insertions and deletion,
where $n$ is the current number of rectangles in~$\objects$.
\end{theorem}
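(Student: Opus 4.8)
The plan is to establish correctness and the two cost bounds separately, leaning on Lemma~\ref{le:quadrant-CF-free} for the conflict-free property of the static coloring. First I would observe that the coloring $\col(r) := \max_{v\in N(r)} \height(v)$ uses only $O(\log n)$ distinct colors: every color is a height of some node in the red-black tree, and such a tree has height $O(\log n)$, so the range of $\col$ is contained in $\{0,1,\dots,O(\log n)\}$. Combined with Lemma~\ref{le:quadrant-CF-free}, this already gives a valid CF-coloring with $O(\log n)$ colors at any moment; what remains is to argue that we can \emph{maintain} it dynamically at the stated recoloring cost.

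Next I would analyze which rectangles can change color when a single rectangle is inserted or deleted. The key structural fact is that $\col(r)$ depends only on the set $N(r)$ and on the heights of the nodes in $N(r)$, and that $N(r)$ lies entirely on the search path to the leaf storing~$r$. An insertion or deletion in a red-black tree touches only $O(\log n)$ nodes: the $O(\log n)$ nodes on one root-to-leaf path, plus the $O(1)$ structural changes (rotations, recoloring of red-black nodes) that rebalancing performs, each of which affects only a constant number of nodes and their $O(\log n)$ ancestors. I would argue that a rectangle $r$ can change color only if some node of $N(r)$ changes—either its height changes, or the $\rmax$ value at an internal node changes so that membership in $N(r)$ is altered. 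The main work is therefore to bound the number of rectangles whose $N$-set is disturbed.

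The core of the argument, and the step I expect to be the main obstacle, is bounding the number of \emph{affected} rectangles by $O(\log n)$. Two effects must be controlled. The height $\height(v)$ can change only for nodes $v$ on a single root-to-leaf path (and near rotation sites), giving $O(\log n)$ affected nodes, each of which contributes the color of at most one rectangle, namely $\rmax(\myright(v))$; since the map $v \mapsto \rmax(\myright(v))$ assigns a height to only the $O(1)$ rectangles defining the color at each such node, this is $O(\log n)$ rectangles. The subtler effect is the change of $\rmax$ values: when $\rmax(v)$ is updated along the search path, the membership condition $\rmax(\myright(v))=r$ may flip for $O(\log n)$ nodes, so $O(\log n)$ rectangles may gain or lose a node from their $N$-set. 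I would show that because each node contributes to the $N$-set of a single rectangle (the one it stores as $\rmax(\myright(v))$), and only $O(\log n)$ nodes have their $\rmax$ altered, at most $O(\log n)$ rectangles are affected in total. Re-evaluating $\col(r)$ for each such rectangle takes $O(\log n)$ lookups, but the number of actual recolorings is $O(\log n)$.

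Finally I would knit the pieces together: the augmented information $\rmax$ is maintainable in $O(\log n)$ time per update by the standard red-black augmentation argument already cited in the excerpt; the coloring uses $O(\log n)$ colors by the height bound; it is conflict-free by Lemma~\ref{le:quadrant-CF-free}; and by the affected-set analysis each insertion or deletion triggers only $O(\log n)$ recolorings. This yields the theorem. The delicate point that I would be most careful about is ensuring that rebalancing rotations, which locally restructure the tree, do not silently change the colors of rectangles in entire subtrees—here I would use the fact that a rotation preserves the in-order sequence and changes $\height$ and $\rmax$ only for the constant number of nodes at the rotation site together with their ancestors, keeping the total number of disturbed nodes within $O(\log n)$.
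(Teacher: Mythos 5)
Your proposal is correct and follows essentially the same route as the paper's proof: conflict-freeness from Lemma~\ref{le:quadrant-CF-free}, the color bound from the $O(\log n)$ height of the red-black tree, and the recoloring bound by observing that $\col(r)$ can change only when $N(r)$ or the height of a node in $N(r)$ changes, which happens for $O(\log n)$ nodes per update (the search path plus the $O(1)$ rotations), each affecting $O(1)$ rectangles. Your extra care about rotations and the per-node accounting via $v \mapsto \rmax(\myright(v))$ just makes explicit what the paper states more tersely.
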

\begin{proof}
Consider the coloring method described
\ifShort above, which by
Lemma~\ref{le:quadrant-CF-free} is conflict-free.
\else above.
Lemma~\ref{le:quadrant-CF-free} states that the coloring is conflict-free.
\fi
Since red-black trees have height $O(\log n)$, the number of colors used is
$O(\log n)$ as well.

Now consider an update on $\objects$. The augmented red-black tree can be updated
in $O(\log n)$ time in a standard manner~\cite{clrs-ia-09}. The color of a rectangle
$r\in \objects$ can only change when (i) the set $N(r)$ changes, or
(ii) the height of a node in $N(r)$ changes. We argue that this only happens
for $O(\log n)$ rectangles. Consider an insertion; the argument for deletions
is similar. In the first phase of the insertion algorithm for
red-black trees~\cite{clrs-ia-09} a new leaf is created
for the rectangle to be inserted. This may change $\height(v)$ or $\rmax(v)$
only for nodes $v$ on the path to this leaf, so
it affects the color of $O(\log n)$ rectangles. In the second phase the balance is restored using
$O(1)$ rotations. Each rotation changes $\height(v)$ or $\rmax(v)$ for only
$O(\log n)$ nodes, so also here only $O(\log n)$ rectangles are affected.
\end{proof}
\subsection{Maintaining a CF-Coloring for Unit Squares}
Let~$\objects$ be a set of unit squares. We first assume that all
squares in $\objects$ contain the origin.

A naive way to use the result from the previous section is to
partition each square $\sq\in\objects$ into four rectangular parts
by cutting it along the $x$-axis and the $y$-axis. Note that the
set of north-east rectangle parts (i.e., the parts to the north-east
of the origin) are all anchored
rectangles, so we can use the method described above to maintain a CF-coloring
on them. The other part types (south-east, south-west, and north-west) can be treated similarly.
Thus every square $\sq\in \objects$ receives four colors. If we now assign
a final color to $\sq$ that is the four-tuple consisting of those
four colors, then we obtain a CF-coloring with $O(\log^4 n)$ colors.
(This trick of using a ``product color'' was also used by, among others,
Ajwani \etal \cite{Ajwani2012}.)

It is possible to improve this by using the following fact: the ordering of
the~$x$-coordinates of the top-right corners of the squares in $\objects$ is the same
as the ordering of their bottom-right (or bottom-left, or top-left) corners.
This implies that instead of working with four different trees
we can use the same tree structure for all part types.
Moreover, even the extra information stored in
the internal nodes is the same for the north-east and north-west parts,
since the $y$-coordinates of the top-right and top-left vertices are the same.
Similarly, the extra information for the south-east and south-west parts
are the same. Therefore, we can modify the augmented
red-black tree to store two squares per internal node instead of
one:
\begin{itemize}
\item  $\mbox{$\smax(v)$  :=  the square $\sq\in \objects(v)$ that maximizes $\sq_y$}$,
\item  $\mbox{$\smin(v)$  :=  the square $\sq\in \objects(v)$ that minimizes $\sq_y$}$.
\end{itemize}
Next we modify our coloring function. Therefore we first
redefine the set~$N(s)$ of nodes for each square~$\sq\in \objects$:
\begin{align*}
N(s) := \{ \mbox{the leaf storing~$\sq$}\} \cup N_{\myne}(\sq) \cup N_{\myse}(\sq) \cup N_{\mysw}(\sq) \cup N_{\mynw}(\sq),
\end{align*}
where
\begin{itemize}
\item $N_{\myne}(\sq) := \{ v \in \tree :
    \mbox{$v$ is an internal node with $\smax(\myright(v)) = \sq$} \}$,
\item $N_{\myse}(\sq) := \{ v \in \tree :
    \mbox{$v$ is an internal node with $\smin(\myright(v)) = \sq$} \}$,
\item $N_{\mysw}(\sq) := \{ v \in \tree :
    \mbox{$v$ is an internal node with $\smin(\myleft(v)) = \sq$} \}$,
\item $N_{\mynw}(\sq) := \{ v \in \tree :
    \mbox{$v$ is an internal node with $\smax(\myleft(v)) = \sq$} \}$.
\end{itemize}

The coloring is as follows. We now
allow four colors per height-value, namely
for height-value~$h$ we give colors $4h+j$ for $j\in\{0,1,2,3\}$. These colors
essentially correspond to the colors we would give out for the four part types.
The color of a square~$\sq$ is now defined as
\[
\col(\sq) :=
\begin{cases}
0 & \mbox{if } \max\limits_{v\in N(s)} \height(v) = 0 \mbox{ ($\sq$ is only stored at a leaf),}
\\
4\cdot\max\limits_{v\in N(s)} \height(v) + j
& \mbox{if } \max\limits_{v\in N(s)} \height(v) > 0,
\end{cases}
\]
where
\[
j :=
\begin{cases}
0 & \mbox{if $\height(s) = \max\limits_{v\in N_{\myne}(s)} \height(v),$} \\
1 & \mbox{if the condition for $j=0$ does not apply and $\height(s) = \max\limits_{v\in N_{\myse}(s)} \height(v),$} \\
2 & \mbox{if the conditions for $j=0,1$ do not apply and $\height(s) = \max\limits_{v\in N_{\mysw}(s)} \height(v),$} \\
3 & \mbox{otherwise (we now must have $\height(s) = \max\limits_{v\in N_{\mynw}(s)} \height(v)$).}
\end{cases}
\]
The following lemma can be proven in exactly the same was ay Lemma~\ref{le:quadrant-CF-free}.
The only addition is that, when considering a set $\objects(v)$, we need to make a distinction
depending on in which quadrant the query point~$q$ lies. If it lies in the north-east quadrant
we can follow the proof verbatim, and the other cases are symmetric.
\begin{lemma}\label{le:u-squares-CF-free}
The coloring defined above is conflict-free.
\end{lemma}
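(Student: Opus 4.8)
The plan is to prove the statement by induction on $\height(v)$, showing that the restriction of the coloring to $\objects(v)$ is conflict-free; since $\objects=\objects(\myroot(\tree))$ this yields the lemma. The base case $\height(v)=0$ is a single square and is trivial. For the inductive step I fix a point $q$ with $\objects_q(v)\neq\emptyset$ and classify $q$ by the quadrant, relative to the origin, in which it lies. The four cases are symmetric, so I would treat the north-east quadrant in full and only indicate the changes for the others. The point is that when $q$ lies in the north-east quadrant, a square $s$ contains $q$ iff its north-east part does, and these north-east parts are exactly anchored rectangles whose top-right corner has $y$-coordinate $s_y$. Hence $\smax$ plays the role that $\rmax$ played in Lemma~\ref{le:quadrant-CF-free}, and the case analysis of that proof carries over verbatim: if $q_x>x(v)$ then $\objects_q(v)=\objects_q(\myright(v))$, and if $q_x\le x(v)$ and $q_y>\smax(\myright(v))_y$ then $\objects_q(v)=\objects_q(\myleft(v))$, so in both cases induction applies. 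The remaining case is $q\in s^{*}$ with $s^{*}:=\smax(\myright(v))$, and there I must show that $s^{*}$ receives a unique color in $\objects_q(v)$.

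The main difference from Lemma~\ref{le:quadrant-CF-free}, and the step I expect to be the real obstacle, is that the sets $N(\cdot)$ are no longer pairwise disjoint: a single internal node $v$ is simultaneously the defining node, at height $\height(v)$, of up to four distinct squares, namely $\smax(\myright(v))$, $\smin(\myright(v))$, $\smin(\myleft(v))$ and $\smax(\myleft(v))$, one for each of $N_{\myne}, N_{\myse}, N_{\mysw}, N_{\mynw}$. Consequently the clean argument of Lemma~\ref{le:quadrant-CF-free}, that the defining node of any other rectangle cannot be $v$, breaks down, and two distinct squares may genuinely share a defining node. This is precisely what the factor-four refinement of the color, the term $j\in\{0,1,2,3\}$, is designed to repair, and the heart of the proof is to show that the $j$-component resolves exactly these collisions.

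To carry this out I would compare the defining node $u$ of $s^{*}$, the height-maximizing node of $N(s^{*})$, which since $v\in N_{\myne}(s^{*})$ is either $v$ or an ancestor of $v$ and so satisfies $\height(u)\ge\height(v)$, with the defining node $w$ of an arbitrary other square $r\in\objects_q(v)$. The key structural fact is that heights strictly decrease from the root downwards, so any two nodes on a single root-to-leaf path have distinct heights. Suppose the height-components of $\col(s^{*})$ and $\col(r)$ coincide, i.e.\ $\height(u)=\height(w)$. A proper descendant of $v$ has height $<\height(v)\le\height(u)$, so $w$ cannot be one; hence $u$ and $w$ both lie on the root-to-$v$ path and, having equal height, must be equal. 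This common node is the $N_{t_1}$-node of $s^{*}$ and the $N_{t_2}$-node of $r$ for two types $t_1\neq t_2$ (equal types would force $s^{*}=r$). Because $u$ is the unique node of its height on the path to $s^{*}$'s leaf, no type other than $t_1$ can realize $s^{*}$'s maximum height, so the $j$-value of $s^{*}$ equals $t_1$; symmetrically that of $r$ equals $t_2$, and therefore $\col(s^{*})\neq\col(r)$. If the height-components differ, the colors differ outright. The remaining point is that the fixed priority $\myne>\myse>\mysw>\mynw$ in the definition of $j$ is exactly what makes $j$ well defined and absorbs the degenerate situations, for instance when a subtree is a single leaf and one square is an extreme under two types; handling these follows the paper's distinctness convention and is routine, completing the induction.
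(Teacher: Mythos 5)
Your proof is correct and takes essentially the same route as the paper's: induction on $\height(v)$, a case distinction on which quadrant the query point $q$ lies in, and a reduction of the north-east case to the anchored-rectangle argument of Lemma~\ref{le:quadrant-CF-free}, with $\smax$ playing the role of $\rmax$. It is in fact more complete than the paper's own proof, which merely asserts that the north-east case follows ``verbatim'': you correctly identify that the disjointness of the sets $N(\cdot)$---the crux of the Claim inside Lemma~\ref{le:quadrant-CF-free}---fails here because one internal node can define up to four squares, and you supply the missing tiebreak argument, namely that equal height components force a common defining node $u=w$ (since all nodes of $N(\cdot)$ lie on one root-to-leaf path and heights strictly decrease along it), whereupon the uniqueness of each type's extreme square at that node forces the $j$-components to differ.
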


It remains to remove the restriction that all squares contain the
origin. To this end we use a grid-based method, similar to the
one used by, e.g., Chen~\etal~\cite{cks-ocfc-09}. Consider the integer grid, and assign each
square in $\objects$ to the grid point it contains; if a square
contains multiple grid points, we assign it to the lexicographically smallest one.
Thus we create for each grid point $(i,j)$ a set $\objects(i,j)$ of
squares that all contain the point~$(i,j)$.
We maintain a CF-coloring for each such set using the method described above.
Note that a square in $\objects(i,j)$ can only intersect squares in
$\objects(i',j')$ when $(i',j')$ is one of the eight neighboring grid points of $(i,j)$.
Hence, when $i'=i\mod 2$ and $j'=j\mod 2$ we can re-use the same color set,
and so we only need four color sets of $O(\log n)$ colors each.
\begin{theorem}
Let $\objects$ be a set of unit squares in the plane. Then it is
possible to maintain a CF-coloring on $\objects$ with $O(\log n)$ colors
using $O(\log n)$ recolorings per insertions and deletion,
where $n$ is the current number of squares in~$\objects$.
\end{theorem}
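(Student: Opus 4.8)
The plan is to combine the coloring of Lemma~\ref{le:u-squares-CF-free}, which handles unit squares that all contain the origin, with the grid-based decomposition sketched above to treat squares in arbitrary position. First I would fix the decomposition precisely: assign every square $\sq\in\objects$ to the lexicographically smallest integer grid point it contains, obtaining the sets $\objects(i,j)$. Since each square has side length~$1$, every square contains at least one grid point, so the $\objects(i,j)$ partition $\objects$. For a fixed grid point $(i,j)$, all squares in $\objects(i,j)$ contain $(i,j)$; translating $(i,j)$ to the origin, the coloring of Lemma~\ref{le:u-squares-CF-free} applies and yields a conflict-free coloring of $\objects(i,j)$ with $O(\log n)$ colors. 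The recoloring bound for a single such set follows exactly as in the anchored-rectangle case: an update touches only $O(\log n)$ tree nodes and triggers $O(1)$ rotations, so only $O(\log n)$ squares are recolored.

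Next I would organize the color sets by parity. I would introduce four pairwise disjoint palettes, one for each class $(i\bmod 2,\,j\bmod 2)\in\{0,1\}^2$, and color each set $\objects(i,j)$ from the palette of its class. Two distinct grid points of the same class differ by at least~$2$ in some coordinate, hence are not among the eight neighbours of one another; by the locality observation above, squares assigned to such points cannot intersect. Consequently no single point is covered by squares from two different grid sets of the same class, so reusing one palette across an entire class is safe. This gives $4\cdot O(\log n)=O(\log n)$ colors in total.

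The crux is to verify that the union of these per-set colorings is conflict-free. Fix a query point $q$ with $\objects_q\neq\emptyset$, and for each parity class $p$ let $\objects_{q,p}$ be the squares of class~$p$ containing~$q$. By the previous paragraph all squares in $\objects_{q,p}$ come from a single grid set $\objects(i,j)$, so the conflict-free coloring of that set supplies a square $s^\star\in\objects_{q,p}$ whose color is unique within $\objects_{q,p}$. I would then use that the four palettes are pairwise disjoint, so this color cannot occur on any square of a different class; hence $s^\star$ has a color that is unique in all of $\objects_q$. I expect this step to be the main obstacle: one must check that the uniqueness obtained locally inside one grid set is not destroyed when the sets are merged, which is precisely what the parity-based disjointness of the palettes guarantees.

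Finally, for the update cost I would note that inserting or deleting a square only changes the set $\objects(i,j)$ to which it is assigned, leaving all other sets and their colorings untouched. Thus the $O(\log n)$ recoloring bound for a single origin-containing set carries over directly to the combined structure, completing the proof.
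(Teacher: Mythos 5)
Your proposal is correct and follows essentially the same route as the paper: the origin-containing case via Lemma~\ref{le:u-squares-CF-free}, assignment of each square to the lexicographically smallest grid point it contains, and reuse of four disjoint palettes indexed by the parity class $(i\bmod 2,\,j\bmod 2)$. The only difference is that you spell out explicitly the merging argument (locality of intersections plus disjointness of palettes implies the combined coloring is conflict-free), which the paper leaves implicit.
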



\subsection{Maintaining a CF-Coloring for Bounded-Size Rectangles}
Let~$\objects$ be a set of \emph{bounded-size rectangles}:
rectangles whose widths and heights are between 1 and~$c$ for some fixed constant~$c$. Note that in practice, two different base stations
have roughly the same coverage, hence it makes sense to assume
the ratio is bounded by some constant~$c$.

First consider the case
where all rectangles in $\objects$ contain the origin.
Here, the $x$-ordering of the top-right corners of the rectangles
may be different from the $x$-ordering of the top-left corners
as we no longer use unit squares. Therefore the trees for the east
(that is, north-east and south-east) parts
no longer have the same structure. Note that the~$x$-ordering
of the top-left and bottom-left corners are the same, hence
only one tree suffices for the east parts, and the same holds
for the west parts.
Hence, we build and maintain two
separate trees, one for the east parts of the
rectangles and one for the west parts. In the east tree we only work
with the sets $N_{\myne}(\sq)$ and $N_{\myse}(\sq)$, and in the
west tree we only work with $N_{\mysw}(\sq)$ and $N_{\mynw}(\sq)$;
for the rest the structures and colorings are defined in the same as before.
We then use the product coloring to obtain our bound:
we give each rectangle a pair of colors---one coming from
the east tree, one coming from the west tree---resulting in $O(\log^2 n)$
different color pairs.

To remove the restriction that each rectangle contains the origin we
use the same grid-based approach as for unit squares. The only difference
is that a rectangle in a set $\objects(i,j)$ can now intersect rectangles from up to
$(1+2c)^2-1$ sets $\objects(i',j')$, namely with $i-c\leq i'\leq i+c$ and $j-c\leq j'\leq j+c$.
Since $c$ is a fixed constant, we still need only $O(1)$ color sets.
\begin{theorem}
Let $\objects$ be a set of bounded-size rectangles in the plane. Then it is
possible to maintain a CF-coloring on $\objects$ with $O(\log^2 n)$ colors
using $O(\log n)$ recolorings per insertion and deletion,
where $n$ is the current number of rectangles in~$\objects$.
\end{theorem}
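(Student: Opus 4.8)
The plan is to formalize the construction described just above and to verify the three quantities the theorem asserts---conflict-freeness, the $O(\log^2 n)$ colour bound, and the $O(\log n)$ recolouring bound---while pinning down the one combinatorial detail left implicit, namely that the grid scheme needs only $O(1)$ colour sets. I would first dispose of the case where every rectangle in $\objects$ contains the origin. Here I build two augmented red-black trees, an \emph{east tree} keyed on the $x$-coordinate of the right edges and a \emph{west tree} keyed on the $x$-coordinate of the left edges; each is a single well-defined tree because the $x$-order of the top-right and bottom-right corners coincide (they share the right edge), and likewise for the two left corners. In the east tree I store $\smax(v)$ and $\smin(v)$ and colour using $N_{\myne}(\sq)$ and $N_{\myse}(\sq)$ exactly as in the unit-square construction but with only two part types, so each tree uses $O(\log n)$ colours; the final colour of a rectangle is the pair consisting of its east-colour and its west-colour, giving $O(\log^2 n)$ distinct pairs.

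For conflict-freeness I would argue that the east colouring is already conflict-free for every query point $q=(q_x,q_y)$ with $q_x\geq 0$. Indeed, since all rectangles contain the origin, a rectangle contains such a $q$ iff its east part does, and that east part lies in the NE quadrant when $q_y\geq 0$ and in the SE quadrant otherwise; in either case the argument of Lemma~\ref{le:u-squares-CF-free} applies verbatim (restricted to the two east part types), so some rectangle in $\objects_q$ has a unique east-colour. Symmetrically, the west colouring is conflict-free for all $q$ with $q_x<0$. The standard product-colouring observation then finishes this case: whichever of the two colourings certifies a unique colour for $q$, that same rectangle also has a unique \emph{pair}, since a second rectangle sharing its pair would in particular share the certifying coordinate-colour. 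The recolouring bound is inherited directly: an update touches each tree as in the earlier theorem, changing $O(\log n)$ east-colours and $O(\log n)$ west-colours, hence $O(\log n)$ pairs in total.

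To remove the restriction that every rectangle contains the origin I reuse the grid approach from the unit-square case, assigning each rectangle to a grid point it contains (lexicographically smallest if there are several) and maintaining an independent product colouring on each set $\objects(i,j)$. The only new point is the interference pattern between cells: a rectangle assigned to $(i,j)$ is contained in $[i-c,i+c]\times[j-c,j+c]$ because its width and height are at most $c$, so rectangles from $\objects(i,j)$ can meet rectangles only from cells $(i',j')$ whose indices differ from $(i,j)$ by at most a constant multiple of $c$. Colouring cells according to $(i \bmod p,\, j \bmod p)$ for a suitable integer $p=\Theta(c)$ therefore makes any two distinct cells that share a colour set too far apart to contain intersecting rectangles, so $p^2=O(1)$ colour sets of $O(\log^2 n)$ colours each suffice, and the grand total remains $O(\log^2 n)$ colours with $O(\log n)$ recolourings per update.

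The step I expect to be the main obstacle is the conflict-freeness of the product colouring, and in particular being careful that the east (resp.\ west) colouring is conflict-free not merely for its own quadrant but for the entire right (resp.\ left) half-plane; once the two part types inside a single tree are correctly shown to cover both quadrants on their side via the Lemma~\ref{le:u-squares-CF-free} argument, the product step and the grid bookkeeping are routine, the latter requiring only the elementary observation that a bounded-size rectangle cannot reach beyond an $O(c)$-neighbourhood of its assigned grid point.
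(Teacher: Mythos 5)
Your proposal is correct and takes essentially the same route as the paper: two augmented red-black trees keyed on the right and left edges (handling the east part types $N_{\myne},N_{\myse}$ and west part types $N_{\mysw},N_{\mynw}$ respectively), a product colouring giving $O(\log^2 n)$ colour pairs, and the grid reduction using $O(1)$ colour sets because $c$ is a fixed constant. The only difference is that you make explicit what the paper leaves implicit---that each tree's colouring is conflict-free on its entire half-plane and that a unique coordinate-colour forces a unique pair---which is a faithful filling-in of the argument rather than a different approach.
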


\subsection{Maintaining a CF-Coloring for Rectangles with Coordinates from a Fixed Universe}
The solution can also be extended to rectangles of arbitrary sizes,
if their coordinates come from a fixed universe~$U := \{0,\ldots,N-1\}$
of size~$N$. Again, from a practical point of view it makes sense as
in a city for instance, the places a base station can be created
are limited.

To this end we construct a balanced tree $\tree_x$ over the universe~$U$,
and we associate each rectangle $r=[r_{x,1},r_{x,2}]\times [r_{y,1},r_{y,2}]$
to the highest node~$v$ in $\tree_x$ whose $x$-value $x(v)$ is contained in $[r_{x,1},r_{x,2}]$.
Let $\objects(v)$ be the set of objects associated to~$v$.
For each node $v\in\tree_x$ we construct a balanced tree $\tree_y(v)$ over the universe,
and we associate each rectangle $r\in\objects(v)$
to the highest node~$w$ in $\tree_y(v)$ whose $y$-value $y(w)$ is contained in $[r_{y,1},r_{y,2}]$.
(In other words, we are constructing a 2-level interval tree~\cite{bcko-cgaa-08}
on the rectangles, using the universe to provide the skeleton of the tree. The reason
for using a skeleton tree is that otherwise we have to maintain balance under
insertions and deletions, which is hard to do while ensuring worst-case bounds on the
number of recolorings.)
Let $\objects(w)$ be the set of objects associated to a node~$w$ in any
second-level tree~$\tree_y(v)$. All rectangles in $\objects(w)$ have a point
in common, namely the point $(x(v),y(w))$. Therefore we can proceed as in
the previous section, and maintain a CF-coloring on $\objects(w)$
with a color set of size $O(\log^2 n)$, using $O(\log n)$ recolorings per insertions and deletion.

Note that for any two nodes $w,w'$ at the same level in a tree $\tree_y(v)$,
any two rectangles $r\in\objects(w)$ and $r'\in\objects(w')$ are disjoint.
Hence, over all nodes $w\in\tree_x(v)$ we only need $O(\log N)$ different
color sets. Similarly, for any two nodes $v,v'$ of $\tree_x$ at the same level,
any two rectangles $r\in\objects(v)$ and $r'\in\objects(v')$ are disjoint.
Hence, the total number of color sets we need is $O(\log^2 N)$.
This leads to the following result.
\begin{theorem}
Let $\objects$ be a set of rectangles in the plane, whose coordinates
come from a fixed universe of size~$N$. Then it is
possible to maintain a CF-coloring on $\objects$ with $O(\log^2 N\log^2 n)$ colors
using $O(\log n)$ recolorings per insertions and deletion,
where $n$ is the current number of rectangles in~$\objects$.
\end{theorem}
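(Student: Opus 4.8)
The plan is to follow the two-level skeleton interval tree described above and reduce the problem, group by group, to the common-point case already handled for bounded-size rectangles. Concretely, I would process each second-level node $w$ of each tree $\tree_y(v)$ separately: since every rectangle of $\objects(w)$ contains the point $(x(v),y(w))$, it suffices to CF-color each $\objects(w)$ on its own (using a color set private to the pair $(v,w)$) and then argue how these color sets can be shared.

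First I would observe that the per-group coloring does not actually rely on the bounded-size assumption. The only place where the aspect-ratio bound $c$ entered the previous subsection is the grid argument bounding how many cells a rectangle can meet; the coloring of a set of rectangles all containing a common point is built from an \emph{east} tree and a \emph{west} tree, keyed by the $x$-coordinates of the right sides and the left sides respectively, and this construction is valid for arbitrary rectangles (the top-right and bottom-right corners of any rectangle share an $x$-coordinate, and likewise on the left). Hence the common-point coloring applies verbatim to each $\objects(w)$, giving a CF-coloring of $\objects(w)$ with a color set of size $O(\log^2 n)$ and $O(\log n)$ recolorings per update.

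Next I would bound the number of distinct color sets by $O(\log^2 N)$ using two disjointness properties of the skeleton tree, both provable directly from the rule that a rectangle is associated with the \emph{highest} skeleton node whose coordinate falls in its projection. For two nodes $w,w'$ at the same level of a fixed $\tree_y(v)$, the rectangles of $\objects(w)$ and $\objects(w')$ have disjoint $y$-projections and are therefore disjoint, so one color set suffices per level of $\tree_y(v)$, i.e.\ $O(\log N)$ color sets per second-level tree. Similarly, for two nodes $v,v'$ at the same level of $\tree_x$ the rectangles of $\objects(v)$ and $\objects(v')$ have disjoint $x$-projections, so the second-level trees hanging off one level of $\tree_x$ reuse the same family of color sets. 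Combining the two gives $O(\log N)\cdot O(\log N)=O(\log^2 N)$ color sets, each of size $O(\log^2 n)$, for a total of $O(\log^2 N\log^2 n)$ colors.

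Finally, for the recoloring bound I would exploit that $\tree_x$ and each $\tree_y(v)$ are built over the fixed universe and are therefore never rebalanced: the $(v,w)$ to which a rectangle is associated is determined solely by the fixed skeleton and never changes when \emph{other} rectangles are inserted or deleted. Thus a single update changes only the one set $\objects(w)$ containing (or losing) the updated rectangle, triggering one update to the common-point structure at $w$ at a cost of $O(\log n)$ recolorings, and no other group is touched. The main obstacle is the first step: one must check carefully that the coloring of the previous subsection is genuinely independent of the bounded-size assumption, since that assumption was stated as part of that subsection; once this is settled, the disjointness and recoloring claims are routine consequences of the skeleton-tree structure.
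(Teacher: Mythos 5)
Your proposal is correct and follows essentially the same route as the paper: a two-level interval tree with a fixed skeleton over the universe, reduction of each node set $\objects(w)$ to the common-point coloring of the previous subsection (which, as you correctly note, never uses the bounded-size assumption---that was only needed for the grid argument), reuse of color sets across same-level nodes via disjointness of slabs, and the observation that the fixed skeleton avoids rebalancing so each update touches only one group. The slab-projection justification you give for the disjointness claims is a correct (and slightly more explicit) version of what the paper asserts without proof.
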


\noindent \emph{Remark.}
Instead of assuming a skeleton tree and working with a fixed skeleton for
our 2-level  interval tree, we can also use randomized search trees.
Then, assuming the adversary doing the insertions and deletions is oblivious
of our structure and coloring, the tree is expected to be balanced at any
point in time. Hence, we obtain $O(\log^4 n)$ colors in expectation,
at the cost of $O(\log n)$ recolorings (worst-case) per update.



\section{A General Technique}\label{sec:general}
In this section we present a general technique to obtain a dynamic CF-coloring
scheme in cases where there exists a static unimax coloring.
(Recall that a unimax coloring is a CF-coloring where for any point $q$ the
object from $\objects_q$ with the maximum color is unique.)
Our technique results in a dynamic CF-coloring that uses $O(\cf(n) \log^2 n)$
colors, where $\cf(n)$ is the number of colors used in the static unimax coloring,
at the cost of $O(\log n)$ recolorings per update.
We first describe our technique for the case of insertions only. Then we extend
the technique to the fully-dynamic setting, for the case where the unimax coloring allows
for so-called weak deletions.

We remark that even though we describe our technique in the geometric setting in the plane,
the techniques provided in this section can be applied in the abstract hypergraph
setting as well.

\subsection{An Insertion-Only Solution}
Let~$\objects$ be a set of objects in the plane and assume
that~$\objects$ can be colored in a unimax fashion using~$\cf(n)$
colors, where~$\cf$ is a non-decreasing function.
Here it does not matter if $\objects$ is a set of geometric objects that we
want to CF-color with respect to points, or a set of points that we want to
CF-color with respect to a family of geometric ranges.
For concreteness we refer to the elements from $\objects$ as objects.

Our technique to maintain a CF-coloring under insertions
of objects into~$\objects$ is based on the \emph{logarithmic method}~\cite{bs-dspsd-80},
which is also used to make static data structures semi-dynamic.
Thus at any point in time we have $\lceil \log n\rceil+1$ sets $\objects_i$
such that each set $\objects_i$, for $i=0,\ldots,\lceil\log n\rceil$, is
either empty or contains exactly~$2^i$ objects. The idea is to give each set~$\objects_i$
its own color set, consisting of $\cf(2^i)$ colors.
Maintaining a CF-coloring
under insertions such that the \emph{amortized} number of recolorings is small,
is easy (and it does not require the coloring to be unimax):
when inserting a new object we find the first empty set $\objects_i$,
and we put all objects in $\objects_{0}\cup\cdots\cup\objects_{i-1}$
together with the new object into $\objects_i$. The challenge is to
achieve a \emph{worst-case} bound on the number of recolorings per insertion.
Note that for the maintenance of data structures, it is known how to achieve
worst-case bounds using the logarithmic method. The idea is to build
the new data structure for $\objects_i$ ``in the background'' and switch to the new structure
when it is ready. For us this does not work, however, since we would still need
many recolorings when we switch. Hence, we need a more careful approach.
\medskip

When moving all objects from $\objects_{0}\cup\cdots\cup\objects_{i-1}$
(together with the new object) into $\objects_i$, we do not recolor them all at once
but we do so over the next $2^i$ insertions. As long as we still need to recolor
objects from $\objects_i$, we say that $\objects_i$ is \emph{in migration}.
We need to take care that the coloring of a set that is in migration,
where some objects still have the color from the set $\objects_j$ they came from
and others have already received their new color in~$\objects_i$, is valid.
For this we need to recolor the objects in a specific order,
which requires the static coloring to be unimax as explained below.
Another complication is that, because the objects in $\objects_j$ that are being
moved to $\objects_i$ still have their own color, we have to be careful when we create
a new set $\objects_j$. To avoid any problems, we need several color sets
per set. Next we describe our scheme in detail.

As already mentioned, we have sets $\objects_0,\ldots,\objects_{\ell}$, where
$\ell:=\lceil \log n \rceil$. Each set can be in one of three states:
\emph{empty}, \emph{full}, or \emph{in migration}. For each $i$ with
$0\leq i\leq\ell$ we have $\ell-i+1$ color sets of size $\cf(2^i)$ available,
denoted by $\C(i,t)$ for $0\leq t\leq \ell-i$. The insertion of an object~$\ob$
into $\objects$ now proceeds as follows.
\begin{enumerate}
\item Let~$i$ be the smallest index such that~$S_i$ is empty. Note
      that~$i$ might be~$\ell +1$, in which case we introduce a new set
      and redefine~$\ell$. Note that this only happens when the number of
      objects reaches a power of 2.
\item Set $\objects_i := \{\ob\} \cup \objects_0\cup\cdots\cup\objects_{i-1}$.
      Mark $\objects_0,\ldots,\objects_{i-1}$ as \emph{empty},
      and mark~$\objects_i$ as \emph{in migration}.
\item \label{step:unimax} Take an unused color set $\C(i,t)$---we argue below that
      at least one color set $\C(i,t)$ with $0\leq t\leq \ell-i$ is currently unused---and compute a
      unimax coloring of~$\objects_i$ using colors from $\C(i,t)$. We refer to the color
      from $\C(i,t)$ that an object in $\objects_i$ receives as its \emph{final color}
      (for the current migration). Except for the newly inserted object~$\ob$,
      we do not recolor any objects to their
      final color in this step; they all keep their current colors.
\item For each set~$\objects_k$ in migration---this includes the set we just created
      in Step~\ref{step:unimax}---we recolor one object whose
      final color is different from its current color and whose
      final color is maximal among such objects. When multiple
      objects share that property, we arbitrarily choose one of them.
      If all objects in $\objects_k$ now have their final color, we mark $\objects_k$
      as \emph{full}.
\end{enumerate}
\begin{lemma}\label{lem:no_2_migrations}
Suppose that when we insert an object $\ob$ into $\objects$, the first
empty set is $\objects_i$. Then the sets $\objects_0,\ldots,\objects_{i-1}$ are full.
\end{lemma}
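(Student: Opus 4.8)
The plan is to prove the statement by a timing argument built on the \emph{binary-counter} interpretation of the sets $\objects_0,\ldots,\objects_\ell$: a non-empty $\objects_k$ plays the role of a $1$-bit and an empty one of a $0$-bit, and a single insertion behaves exactly like incrementing the counter by one, since Step~1 locates the lowest $0$-bit and Step~2 sets it while clearing all lower bits and leaving the higher ones untouched. As $\objects_i$ is by assumption the \emph{first} empty set, every $\objects_j$ with $0\le j\le i-1$ is non-empty at the moment of this insertion, and hence is either \emph{full} or \emph{in migration}. Ruling out the latter for each such $j$ is precisely what has to be shown.

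First I would fix $j<i$ and pin down $C$, the \emph{most recent} insertion at which $\objects_j$ was created. Because a set turns non-empty only when it is itself the lowest empty set (Step~2 never touches higher bits), $\objects_j$ stays non-empty on the whole interval from $C$ up to the current insertion. Tracking the low-order bits $0,\ldots,j-1$, which are all $0$ right after $C$ and all $1$ just before the current insertion (they lie below bit $i$), shows that the low-order part runs through exactly one full cycle: no carry reaches bit $j$ in between, and the current insertion is the very one carrying through bit $j$ and emptying $\objects_j$. Consequently the current insertion is the $2^j$-th insertion after $C$.

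Next I would bound the migration length of $\objects_j$. At creation it holds $2^j$ objects, of which only the freshly inserted object $\ob$ is given its final color in Step~\ref{step:unimax}, so at most $2^j-1$ objects remain to be recolored. By Step~4 each set in migration receives exactly one recoloring per insertion, independently of how many other sets migrate simultaneously, so $\objects_j$ accumulates one recoloring at $C$ and one at every subsequent insertion. Thus at most $2^j-1$ recoloring steps are needed while $2^j$ insertions elapse before $\objects_j$ is emptied; since $2^j-1<2^j$, the migration finishes---and $\objects_j$ becomes full---strictly before the current insertion. Applying this to every $j\in\{0,\ldots,i-1\}$ yields the lemma.

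The step I expect to be the main obstacle is the bookkeeping in the second paragraph: one must argue cleanly that no carry reaches bit $j$ between its creation at $C$ and the current insertion, so that $\objects_j$ genuinely remains non-empty and its migration is neither interrupted nor restarted, and then count the intervening insertions without an off-by-one slip. The reassuring point is the one-step slack ($2^j-1$ recolorings against $2^j$ available insertions), which is exactly what makes the bound hold even in the tight small cases $j=0$ and $j=1$.
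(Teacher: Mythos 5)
Your proof is correct and takes essentially the same approach as the paper's: both pin down the most recent creation of $\objects_j$, observe that refilling the lower sets forces $2^j-1$ further insertions before the current one can reach index $i>j$, and note that the one-recoloring-per-insertion rule (plus the recoloring at creation) finishes the migration of $\objects_j$ with an insertion to spare. Your binary-counter bookkeeping merely makes explicit the carry argument that the paper compresses into the remark that $\objects_j$ cannot have been emptied and re-created in the meantime.
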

\begin{proof}
Suppose for a contradiction that $\objects_j$, for some $0\leq j<i$
is in migration. Consider the last time at which $\objects_j$ was
created---that is, the last time at which we inserted an object $\ob'$ that caused
the then-empty set $\objects_j$ to be created and marked as in migration.
Upon insertion of~$\ob'$, we already perform one recoloring in~$\objects_j$.
At that point all sets~$S_0,\ldots, S_{j-1}$ were marked empty and it
takes~$\sum_{t=0}^{j-1} 2^t=2^j -1$ additional insertions to fill them,
giving us as many recolorings in~$\objects_j$. Thus before we create any
set $\objects_i$ with $i>j$, we have already marked $\objects_j$ as full.
Since $\ob'$ was the last object whose insertion created $\objects_j$,
by the time we create $\objects_i$ the set $\objects_j$ must still
be full---it cannot in the mean time have become empty and later be re-created (and thus be in migration).
\end{proof}
Next we show that in Step~\ref{step:unimax} we always have an unused
color set at our disposal.
\begin{lemma}\label{lem:forr-color-set}
When we create a new set $\objects_i$ in Step~\ref{step:unimax},
at least one of the color sets $\C(i,t)$ with $0\leq t\leq \ell-i$ is currently unused.
\end{lemma}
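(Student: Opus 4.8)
The plan is to show that at most $\ell-i$ of the $\ell-i+1$ available color sets $\C(i,t)$ are currently in use, so that at least one is free. I call a color set $\C(i,t)$ \emph{in use} if some object currently carries a color from it. The heart of the argument is a structural claim: at the moment we execute Step~\ref{step:unimax}, every object whose current color comes from some $\C(i,\cdot)$ must reside in a set $\objects_m$ with $m>i$ that is in migration.

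To prove this claim I would reason about the state of the sets right after Step~2 of the insertion. At that moment $\objects_i$ has just been formed and its objects still carry the colors they had in $\objects_0,\ldots,\objects_{i-1}$, i.e.\ colors from $\C(0,\cdot),\ldots,\C(i-1,\cdot)$, so none of them use a $\C(i,\cdot)$ color; the sets $\objects_0,\ldots,\objects_{i-1}$ are empty; and each higher set $\objects_m$ with $m>i$ is empty, full, or in migration. Now take any object $\ob$ colored from $\C(i,\cdot)$ and let $\objects_m$ be the set containing it. We cannot have $m<i$ (those sets are empty) nor $m=i$ (whose objects carry colors from lower levels), and if $m>i$ then $\objects_m$ cannot be empty (it contains $\ob$) nor full (a full $\objects_m$ has all objects colored from $\C(m,\cdot)$, whereas $\ob$ uses $\C(i,\cdot)$ with $i\neq m$). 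Hence $\objects_m$ is in migration with $m>i$, which proves the claim.

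Next I would bound, for each higher in-migration set, how many distinct $\C(i,\cdot)$ color sets its objects can represent. When such a set $\objects_m$ was created, Lemma~\ref{lem:no_2_migrations} guarantees that $\objects_0,\ldots,\objects_{m-1}$ were all full, so the objects it absorbed from the (unique, full) $\objects_i$ were all colored with a single color set $\C(i,t)$, while every other absorbed object carried a color from a different level. Since migration only recolors objects to their final colors and never reintroduces an old color, the $\C(i,\cdot)$-colored objects still present in $\objects_m$ at any later time all belong to that one color set $\C(i,t)$. Thus each higher in-migration set accounts for at most one in-use color set $\C(i,\cdot)$.

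Finally I would combine the two observations. There are $\ell-i$ levels above $i$, namely $i+1,\ldots,\ell$, and at most one set per level is in migration, so there are at most $\ell-i$ higher in-migration sets; by the previous paragraph each accounts for at most one $\C(i,\cdot)$ color set, and by the structural claim these exhaust all in-use $\C(i,\cdot)$ color sets. Hence at most $\ell-i$ of the $\ell-i+1$ color sets are in use, leaving at least one free. I expect the main obstacle to be stating the structural claim cleanly---in particular, arguing from an object's current color alone that it must sit in a migrating set strictly above level $i$, and that distinct in-use $\C(i,\cdot)$ color sets inject into distinct higher migrating sets so that no double counting occurs.
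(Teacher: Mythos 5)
Your proposal is correct and follows essentially the same route as the paper's proof: a color set $\C(i,t)$ can only be blocked by a higher set $\objects_{m}$ ($m>i$) in migration that absorbed a previous instance of $\objects_i$, which by Lemma~\ref{lem:no_2_migrations} was full and hence used a single color set, so the at most $\ell-i$ higher sets block at most $\ell-i$ of the $\ell-i+1$ available color sets. Your write-up merely makes explicit (via the structural claim and the injectivity remark) the bookkeeping that the paper leaves implicit.
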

\begin{proof}
Consider a color set $\C(i,t)$. The reason we may not be able to use $\C(i,t)$ when we create
$\objects_i$ is that there is a set $\objects_{i'}$ with $i'>i$ that is currently in migration:
the objects from a previous instance of $\objects_i$ (that were put into $\objects_{i'}$
when we created $\objects_{i'}$) may not all have been recolored yet.
By Lemma~\ref{lem:no_2_migrations} this previous instance was full when it
was put into~$\objects_{i'}$ and so it only blocks a single color set,
namely one for $\objects_i$.
Thus the number of color sets $\C(i,t)$ currently in use is at most $\ell-i$.
Since we have $\ell-i+1$ such colors sets at our disposal, one must be unused.
\end{proof}
\ifShort \else
We obtain the following result.
\fi
\begin{theorem}\label{thm:insertion-only}
Let $\family$ be a family of objects such that any subset of $n$
objects from~$\family$ admits a unimax coloring with $\cf(n)$ colors,
where $\cf(n)$ is non-decreasing.
Then we can maintain a CF-coloring on a set $\objects$  of objects from
$\family$ under insertions, such that the number of used colors
is $O(\cf(n)\log^2 n)$ and the number of recolorings per insertion is at most $\lceil\log n\rceil$,
where~$n$ is the current number of objects in~$\objects$.
\end{theorem}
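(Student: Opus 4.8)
The plan is to prove Theorem~\ref{thm:insertion-only} by verifying three things in turn: that the coloring maintained at every moment is conflict-free, that the total number of colors is $O(\cf(n)\log^2 n)$, and that each insertion triggers at most $\lceil\log n\rceil$ recolorings. The bounds on colors and recolorings are almost immediate from the construction, so the real content lies in the correctness of the coloring, and within that the crucial subcase is a set that is \emph{in migration}.

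First I would handle the easy accounting. At any time $\objects$ is partitioned into sets $\objects_0,\ldots,\objects_\ell$ with $\ell=\lceil\log n\rceil$, and each $\objects_i$ uses one of its $\ell-i+1$ color sets $\C(i,t)$, each of size $\cf(2^i)\leq\cf(n)$. Since the sets $\objects_i$ use disjoint palettes, the total number of colors is at most $\sum_{i=0}^{\ell}(\ell-i+1)\cf(2^i)=O(\cf(n)\,\ell^2)=O(\cf(n)\log^2 n)$. For the recoloring bound, note that Step~\ref{step:unimax} recolors only the single newly inserted object~$\ob$, and Step~4 recolors at most one object per set in migration; since there are at most $\ell+1$ sets, the number of recolorings per insertion is $O(\log n)$. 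To get the precise bound $\lceil\log n\rceil$ one observes that $\ob$ is itself the object recolored in its own set $\objects_i$ in Step~4, so the objects recolored are exactly one per set that is in migration, of which there are at most $\ell=\lceil\log n\rceil$.

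The heart of the proof is conflict-freeness, and here the key claim is that the coloring \emph{restricted to a single set $\objects_k$ that is in migration is conflict-free}. Because the palettes $\C(i,t)$ are pairwise disjoint across distinct sets, a query point $q$ with $\objects_q\neq\emptyset$ sees colors coming from several sets, but within each set the colors are drawn from a single palette disjoint from the others; hence it suffices to show each $\objects_k$ is individually conflict-free with respect to~$q$ and then take, say, the set $\objects_k$ of smallest index meeting $q$ whose unique-color witness survives. More precisely, I would argue that \emph{any} set meeting $q$ provides a uniquely colored object among its own elements, and since palettes are disjoint, that object's color is unique in all of $\objects_q$. For a full (or empty) set this is immediate from the unimax property of its static coloring. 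The work is the in-migration case, where some objects of $\objects_k$ still carry their old final color (from the palette they had in their previous set) while others already carry their new final color.

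The main obstacle, then, is exactly this mixed-state analysis, and it is where the \emph{unimax} hypothesis is essential. Here is the idea I would pursue. Fix a set $\objects_k$ in migration and its target unimax coloring using palette $\C(k,t)$; let $\objects_q(k)$ be the objects of $\objects_k$ containing~$q$. In the target coloring, the unimax property guarantees a unique object $\ob^\ast\in\objects_q(k)$ attaining the maximum target color. The recoloring order in Step~4 always recolors an object whose final color is maximal among those not yet at their final color; consequently, if $\ob^\ast$ has not yet been recolored, then \emph{no} object of $\objects_q(k)$ with target color below $\ob^\ast$'s has been recolored either, so all objects still sit in their previous palette and the previous set's coloring (which was full, hence unimax) still witnesses a unique maximum there. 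If instead $\ob^\ast$ has already received its final color, then because its target color is the strict maximum and it comes from the fresh palette $\C(k,t)$ disjoint from the old one, $\ob^\ast$ is uniquely colored among $\objects_q(k)$ regardless of the colors of the not-yet-recolored objects. Either way $\objects_q(k)$ has a uniquely colored object, and disjointness of palettes lifts this to $\objects_q$. I would present this dichotomy as the central claim, invoking Lemma~\ref{lem:no_2_migrations} to ensure that the set from which the migrating objects came was full (hence its coloring was a genuine unimax coloring) and Lemma~\ref{lem:forr-color-set} to guarantee the fresh palette was available, and then conclude conflict-freeness, completing the proof.
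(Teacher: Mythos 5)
Your proof follows essentially the same route as the paper: identical accounting for colors and recolorings, and for conflict-freeness the same dichotomy on whether the object $\ob^\ast$ of maximum final color over $q$ has already been recolored, exploiting the decreasing-final-color recoloring order, palette disjointness, and unimaxity. Your case (A) ($\ob^\ast$ already recolored) is correct. But case (B) has a genuine gap: you claim that if $\ob^\ast$ has not been recolored, then no object of $\objects_q(k)$ with smaller final color has been recolored either, ``so all objects still sit in their previous palette and the previous set's coloring \ldots{} still witnesses a unique maximum.'' This is false in a situation the algorithm deliberately creates: the newly inserted object $\ob$ that triggered the migration receives its final color immediately in Step~\ref{step:unimax}, out of order, and that final color need not be the maximum of the target coloring. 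So $\objects_q(k)$ can contain an object carrying a color from $\C(k,t)$ even though $\ob^\ast$ has not been recolored; worse, $\ob$ has no ``previous set'' or previous palette at all, so your appeal to the old colorings simply does not cover it. This exceptional object is precisely why the paper's key Fact is stated with the ``at most one other object'' allowance, and why the paper adds the remark that the object that caused the migration receives its color immediately and this color need not be the highest.

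The gap is easily repaired inside your framework, which is why I would call it an oversight rather than a wrong approach: in case (B), since no Step~4 recoloring has touched $\objects_q(k)$, the inserted object $\ob$ is the \emph{only} possible element of $\objects_q(k)$ colored from $\C(k,t)$. Hence either $\ob\in\objects_q(k)$, in which case $\ob$ itself is a uniquely colored witness (its color lies in a palette disjoint from all the old ones), or $\ob\notin\objects_q(k)$ and your original argument applies; in the first case the old witnesses also survive, since a color from $\C(k,t)$ cannot collide with them. Two smaller slips to fix while you are at it: the old-colored objects of $\objects_q(k)$ may come from \emph{several} previous sets, not one---each contributes a witness unique within its own (blocked) palette, and disjointness makes any of them globally unique; and in your recoloring count, $\ob$ is colored in Step~\ref{step:unimax}, not in Step~4---Step~4 recolors a \emph{different} object of the newly created set. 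The bound of $\lceil\log n\rceil$ still holds because after Step~2 the sets $\objects_0,\ldots,\objects_{i-1}$ are empty, so at most $\ell-i+1\leq\ell$ sets require a recoloring (and when $i=0$ the newly created singleton set requires none).
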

\begin{proof}
The number of colors used is $\sum_{i=0}^{\ell} (\ell-i+1)\cf(2^i)$,
where $\ell = \lceil\log n\rceil$. Since $\cf(n)$ is non-decreasing,
this is bounded by $O(\cf(n)\log^2 n)$. The number of recolorings
per insertion is at most one per set $\objects_i$, so at most $\lceil\log n\rceil$ in total.
(The total number of sets is actually $\lceil\log n\rceil+1$, but not all
of them can be in migration.)
\medskip

It remains to prove that the coloring is conflict-free.
Consider a point~$q\in\Reals^2$. (Here we use terminology
from CF-coloring of objects with respect to points. In the dual
version, $q$ would be a range.) Let $\objects_i$ be a set containing
an object $\ob$ with $q\in\ob$; if no such set exists there is
nothing to prove.

If~$\objects_i$ is full then
it has a unimax coloring using a color set~$\C(i,t)$
not used by any other set $\objects_j$. Hence, there is
an object containing~$q$ with a unique color.

Now suppose that $\objects_i$ is in migration. We have two cases:
(i)~$q$ is contained in an object from $\objects_i$ that has already received its final color,
(ii)~all objects in $\objects_i$ containing~$q$ still have their old color.

In case~(i) the object containing~$q$ with the highest final color must have
a unique color, because of the following easy-to-prove fact.

\fact{Consider any set $A$ colored with a unimax coloring. Let $z$ be an integer,
and let $B\subseteq A$ be a subset that contains all objects of color
greater than~$z$, some objects of color~$z$, and at most one other object.
Then the coloring of $B$ is unimax.}
This fact proves the statement above for case~(i), because
we recolor the objects in decreasing order of their colors
and the coloring we are migrating to is a unimax coloring.
(The ``at most one other object'' mentioned in the fact is needed because
the object that caused the migration immediately receives its color,
and this color needs not be the highest color.)

In case~(ii), $q$ is contained in an object from some old set $\objects_j$
with $j<i$. At the time we created $\objects_i$ this set $\objects_j$
was CF-colored, and since we did not yet recolor any object from $\objects_j$
that contains~$q$ ---otherwise we are in case~(i)---we conclude that $q$
is contained in an object with a unique color.
\end{proof}

\mypara{Application: Objects with Near-Linear Union Complexity.}
Har-Peled and Smorodinsky~\cite{harpeled-cf-05} proved that
any family of objects with linear union complexity
(for example disks, or pseudodisks)
can be colored in a unimax fashion using~$O(\log n)$
colors. In fact, their result is more general: if the union complexity
is  at most $n\cdot\beta(n)$ then the number of colors is $O(\beta(n)\log n)$.
Note that for disks and pseudodisks we have $\beta(n)=O(1)$,
for fat triangles we have $\beta(n)=O(\log^* n)$~\cite{abes-ubulfo-2014}
and for locally fat objects we have $\beta(n)=O(2^{O(\log^* n)})$~\cite{abes-ubulfo-2014}.
This directly implies the following result.

\begin{corollary}
Let $\family$ be a family of objects such that the union complexity of any subset of $n$
objects from~$\family$ is at most $n\beta(n)$. Then we can maintain a CF-coloring on a set $\objects$
of objects from~$\family$ under insertions, such that the number of used colors
is $O(\beta(n)\log^3 n)$ and the number of recolorings per insertion is $O(\log n)$,
where~$n$ is the current number of objects in~$\objects$.
\end{corollary}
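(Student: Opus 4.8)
The plan is to apply Theorem~\ref{thm:insertion-only} as a black box, using the static unimax coloring result of Har-Peled and Smorodinsky as the supplied static coloring. The corollary is essentially a direct instantiation: the only work is to verify that the hypotheses of the theorem are met and then to substitute the correct value of $\cf(n)$ into the conclusion.

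First I would invoke the result of Har-Peled and Smorodinsky~\cite{harpeled-cf-05}, quoted just above the statement, that any family $\family$ whose $n$-element subsets have union complexity at most $n\beta(n)$ admits a static \emph{unimax} coloring using $\cf(n) = O(\beta(n)\log n)$ colors. The key point to check is that this static coloring is genuinely unimax (not merely conflict-free), since Theorem~\ref{thm:insertion-only} requires a unimax coloring; the cited result provides exactly this. I would also note that $\cf(n) = O(\beta(n)\log n)$ is non-decreasing (or may be taken so without loss of generality), which is the remaining hypothesis of the theorem.

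Next I would simply apply Theorem~\ref{thm:insertion-only} with this $\cf$. The theorem then guarantees a dynamic CF-coloring under insertions using $O(\cf(n)\log^2 n)$ colors with at most $\lceil \log n\rceil$ recolorings per insertion. Substituting $\cf(n) = O(\beta(n)\log n)$ gives a color bound of $O\bigl(\beta(n)\log n \cdot \log^2 n\bigr) = O(\beta(n)\log^3 n)$, and the recoloring bound is $O(\log n)$, exactly as claimed.

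There is no real obstacle here, since the corollary is a corollary: all the difficulty resides in Theorem~\ref{thm:insertion-only} and in the static unimax result, both of which are available. The only mild subtlety worth a sentence is the insistence on unimaxness rather than plain conflict-freeness, as the correctness of the migration procedure in Step~\ref{step:unimax} depends on recoloring objects in decreasing order of their final colors, which is meaningful only for a unimax coloring. Once that is observed, the proof is a one-line substitution.
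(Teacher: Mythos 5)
Your proposal matches the paper exactly: the paper also obtains this corollary by plugging the Har-Peled--Smorodinsky static unimax coloring with $\cf(n)=O(\beta(n)\log n)$ colors directly into Theorem~\ref{thm:insertion-only}, which yields $O(\cf(n)\log^2 n)=O(\beta(n)\log^3 n)$ colors and $O(\log n)$ recolorings per insertion. Your added remark about why unimaxness (rather than mere conflict-freeness) is required is correct and consistent with the paper's discussion of the migration step.
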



\subsection{A Fully-Dynamic Solution}

\ifShort
We now present a generalisation of our method to the fully-dynamic
case. For lack of space we only give a brief sketch; details can be found in
the full version~\cite{arxiv-version}. As before, we assume we have a family~$\family$
of objects such that any set of $n$ objects from $\family$ can
be unimax-colored with $\cf(n)$ colors. We further assume that such a coloring
admits \emph{weak deletions}: once we have colored a given set $\objects$ of $n_0$
objects using $\cf(n_0)$ colors, we can delete objects from it using $r(n_0)$
recolorings per deletion such that the number of colors never exceeds~$\cf(n_0)$.
The functions $\cf(n)$ and $r(n)$ are assumed to be non-decreasing.
\medskip

The insertions in this method are similar to those in the semi-dynamic 
solution, except that now a set $\objects_i$
can be in four states:
\emph{empty}, \emph{non-empty}, \emph{in upwards migration},
and \emph{in downwards migration}. It is worth pointing out
that the upwards migrations are very similar to the migrations
of the previous section, and that
only~$\objects_\ell$ (i.e., the last set)
can be in downwards migration.  The deletion procedure
makes use of the weak deletions.
Furthermore, since now a set~$\objects_i$
can have less than~$2^i$ elements due to deletions, we need to make sure the
number of set stays logarithmic. To that purpose, we make sure
that the last set is at least half full. When this is no longer
true, we combine the last three sets using a downwards migration,
which is similar to the upwards migration. The details are fairly
intricate and can be found in the full version~\cite{arxiv-version}.
We obtain the following theorem.
\medskip

\else 
We now generalize the semi-dynamic solution presented above so that it
can also handle deletions. As before we assume we have a family~$\family$
of objects such that any set of $n$ objects from $\family$ can
be unimax-colored with $\cf(n)$ colors. We further assume that such a coloring
admits \emph{weak deletions}: once we have colored a given set $\objects$ of $n_0$
objects using $\cf(n_0)$ colors, we can delete objects from it using $r(n_0)$
recolorings per deletion such that the number of colors never exceeds~$\cf(n_0)$.
The functions $\cf(n)$ and $r(n)$ are assumed to be non-decreasing.
\medskip

Let $\objects$ be the current set of objects. We again employ
ideas from the logarithmic method, but we need to relax the conditions
on the set sizes. More precisely, we maintain an
integer~$\ell$ and
a partition of $\objects$ into $\ell+1$ sets $\objects_0,\ldots,\objects_{\ell}$,
such that the following \emph{size invariant} is maintained.
\begin{quotation}
\noindent
\vspace*{-5mm}
\begin{description}
\item[(Inv-S)] For all $0\leq i< \ell$ we have $|\objects_i|\leq 2^i$,
                and we have $2^{\ell-2} \leq |\objects_{\ell}|\leq 2^\ell$.
\end{description}
\end{quotation}
Note that the second part of (Inv-S) implies that $\ell=\Theta(\log n)$, where $n:= |\objects|$.
%
As before, for each $i$ with $0\leq i\leq\ell$ we have color sets $\C(i,t)$ available,
each of size $\cf(2^i)$. This time the number of colors sets $\C(i,t)$ is $\ell+2$ for each~$i$, instead
of $\ell-i$. Additionally, we allow the use of colors sets~$\C(\ell+1,t)$.
Hence, for each~$i=0,\ldots, \ell+1$, we have~$\ell+2$ color
sets of size~$\cf(2^i)$.
A set $\objects_i$ can now be in four states:
\emph{empty}, \emph{non-empty}, \emph{in upwards migration},
and \emph{in downwards migration}. It is worth pointing out
that only~$\objects_\ell$ can be in downwards migration.

Our coloring of the sets~$\objects_i$ satisfies
several \emph{color invariants}, which depend on the state of~$\objects_i$.
The invariant for sets $\objects_i$ whose state is \emph{non-empty}
is relatively straightforward.
\begin{quotation}
\noindent
\vspace*{-5mm}
\begin{description}
\item[(Inv-C-NonEmp)] The objects in a set $\objects_i$ whose state is \emph{non-empty}
              are unimax-colored using a color set $\C(i,t)$ not used elsewhere.
\end{description}
\end{quotation}
Before we can state the invariants for sets $\objects_i$ in migration
we need to introduce some notation.

A set $\objects_i$ that is in upwards migration is the disjoint union
of subsets $\objects_i^{(0)},\ldots,\objects_i^{(\ell)}$, some of which may be empty,
plus at most one other object. When set~$\objects_\ell$ is in downwards
migration, it is the disjoint union of either
subsets~$S_\ell^{(\ell-2)}, S_\ell^{(\ell-1)}$, and~$S_\ell^{(\ell)}$,
or of subsets~$S_\ell^{(\ell-1)}, S_\ell^{(\ell)}$, and~$S_\ell^{(\ell+1)}$.
In the former case, we define~$\ell':=\ell$, in the latter case~$\ell'=\ell+1$.
Defining $\ell'$ in this way simplifies the descriptions.

The idea is now that we have a ``global'' coloring for the
set $\objects_i$---this is the new coloring we
are migrating to---and separate ``local'' colorings for each of the sets~$\objects_i^{(m)}$.
The global color of an object is called its \emph{final color}, and
the local color is called its \emph{temporary color}. To know which of these two colors
is the actual color of an object, we maintain a (possibly empty) subset~$\objects_i^*\subseteq \objects_i$ such that
the actual color of an object in $\objects_i^*$ is its final color, and
the actual color of an object in $\objects_i\setminus \objects_i^*$ is its temporary color.
Thus, when $\objects_i^*=\objects_i$ then all objects
received their new color and the migration is finished.

We first give the color invariants that hold for all sets in migration,
and then give additional invariants that depend on whether the set is in upwards
or downwards migration.
\begin{quotation}
\noindent
\vspace*{-5mm}
\begin{description}
\item[(Inv-C-Mig-1)]
    If $\objects_i$ is in migration, then we have a unimax coloring on $\objects_i$ using a color set
    $\C(i,t)$ not used elsewhere. As already mentioned, the color an object receives
    in this coloring is called its \emph{final color}.
\item[(Inv-C-Mig-2)]
    If $\objects_i$ is in migration, then there is an integer $z$ such that $\objects_i^*$ contains all objects from~$\objects_i$ whose final color
    is greater than~$z$, some objects of color~$z$, and at most one other object.
\end{description}
\end{quotation}
For sets in upwards migration we also have the following invariant.
\begin{quotation}
\noindent
\vspace*{-5mm}
\begin{description}
\item[(Inv-C-Up)]
    If $\objects_i$ is in upwards migration, then for each $\objects_i^{(m)}$
	we have a unimax coloring using a color set $\C(m,t)$ not used elsewhere.
    The color an object receives in this coloring is its \emph{temporary color}.
\end{description}
\end{quotation}
Finally, when $\objects_{\ell}$ is in downwards migration---recall that $\objects_i$
can only be in downwards migration when $i=\ell$---we have the following invariant.
\begin{quotation}
\noindent
\vspace*{-5mm}
\begin{description}
\item[(Inv-C-Down)]
    If $\objects_\ell$ is in downwards migration, then the following
    holds.
    \begin{itemize}
    \item  Each set~$\objects_\ell^{(m)}$ with $m<\ell'-2$ is empty.
    \item  The set~$\objects_\ell^{(\ell'-2)}$ is unimax colored
    	   using at most $\ell'-1$ color sets not used elsewhere. More precisely, for each
           $m=0,\ldots, \ell'-2$ at most one color set $\C(m,t)$ is used
           in the coloring of $\objects_\ell^{(\ell'-2)}$.
    \item  The set~$\objects_\ell^{(\ell'-1)}$ is unimax colored
    	   using at most $\ell'$ color sets not used elsewhere. More precisely, for each
           $m=0,\ldots, \ell'-1$ at most one color set $\C(m,t)$ is used
           in the coloring of $\objects_\ell^{(\ell'-1)}$.
    \item  The set~$\objects_\ell^{(\ell')}$ is unimax colored
    	   using at most one color set $\C(\ell',t)$     	
           not used elsewhere.
    \end{itemize}
    The color an object receives in these colorings is its \emph{temporary color}.
\end{description}
\end{quotation}
Note that, when $\objects_{\ell}$ is in downwards migration,
the sets $\objects_{\ell}^{(\ell'-2)}$ and $\objects_{\ell}^{(\ell'-2)}$
are colored using several color sets.
In fact, it is the case that the algorithm distinguishes several subsubsets
of $\objects_{\ell}^{(\ell'-2)}$ (and, similarly, of $\objects_{\ell}^{(\ell'-2)}$),
each of which is unimax-colored using a different color set.
It is easily checked that the coloring obtained in this manner for $\objects_{\ell}^{(\ell'-2)}$
is a unimax-coloring, which allows weak deletions by doing a weak deletion
on the relevant subsubset.
\medskip

The next lemma implies that it is sufficient to maintain a coloring
satisfying all invariants; its proof
is similar to the proof of Theorem~\ref{thm:insertion-only}
that the coloring used by our insertion-only method is conflict-free,
with sets marked non-empty taking the role of sets marked full.
\begin{lemma}\label{le:fully-dynamic-conflict-free}
A coloring satisfying all color invariants is conflict-free.
\end{lemma}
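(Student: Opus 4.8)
The plan is to prove Lemma~\ref{le:fully-dynamic-conflict-free} by mimicking the conflict-free argument from Theorem~\ref{thm:insertion-only}, fixing an arbitrary point~$q$ (a range, in the dual setting) and exhibiting an object containing~$q$ whose actual color is unique among all objects containing~$q$. As in the insertion-only proof, the key structural feature is that each color set $\C(i,t)$ is used by at most one coloring at any time; hence it suffices to find a \emph{single} set $\objects_i$ in which some object containing~$q$ has a color that is unique \emph{within the coloring that object participates in}, and then observe that this color cannot collide with any color used elsewhere. So the real content is local: for each set $\objects_i$ that contains an object covering~$q$, I analyze the state of $\objects_i$ and show that, whenever $q$ is covered at all, one of these sets supplies an object of locally-unique actual color.

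First I would dispose of the easy states. If $\objects_i$ is \emph{non-empty}, then by (Inv-C-NonEmp) it is unimax-colored with a private color set, so the object covering~$q$ with maximum color is unique, and we are done via this set. The substantive case is a set $\objects_i$ \emph{in migration} (upwards or downwards), where objects carry a mixture of final and temporary colors depending on membership in~$\objects_i^*$. Here I would split exactly as in Theorem~\ref{thm:insertion-only}: either (i)~some object of $\objects_i^*$ (an object already wearing its final color) covers~$q$, or (ii)~every object of $\objects_i$ covering~$q$ still wears its temporary color. In case~(i), invariants (Inv-C-Mig-1) and (Inv-C-Mig-2) together say that the final coloring is unimax and that $\objects_i^*$ is precisely a threshold set of the form to which the \textbf{Fact} preceding Theorem~\ref{thm:insertion-only} applies; that Fact then guarantees the object of highest final color covering~$q$ has a unique final color, and since its color set is private, this color is globally unique.

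Case~(ii), where $q$ is covered only by objects still holding their temporary color, is the main obstacle and the place where upwards and downwards migration must be handled separately. For \emph{upwards} migration I would use (Inv-C-Up): each subset $\objects_i^{(m)}$ is itself unimax-colored with a private color set, and since the object covering~$q$ lies in some $\objects_i^{(m)}$, the maximum-temporary-color object of that subset covering~$q$ is locally unique, hence globally unique. For \emph{downwards} migration of $\objects_\ell$ I would invoke (Inv-C-Down): the covering object lies in one of $\objects_\ell^{(\ell'-2)}$, $\objects_\ell^{(\ell'-1)}$, or $\objects_\ell^{(\ell')}$, each of which is unimax-colored using color sets disjoint from everything else (by the careful per-level bookkeeping in (Inv-C-Down), the possibly-several color sets used within one such subset are still private to it), so again the maximum-color covering object within that subset is unique. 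The reason we may always restrict attention to a single subset is the same as in Lemma~\ref{le:quadrant-CF-free}: the point~$q$ need only witness uniqueness in \emph{one} covering object, so it is enough that no two colorings share a color set, which is exactly what ``not used elsewhere'' in every color invariant guarantees.

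The one subtlety I would be careful to spell out is why, in case~(ii), recoloring order ensures no covering object has yet been promoted to $\objects_i^*$: this is precisely the threshold structure in (Inv-C-Mig-2), namely that objects enter $\objects_i^*$ in decreasing order of final color, so if any covering object had been promoted we would be in case~(i). Assembling these cases, every covered point~$q$ is witnessed by a locally-unique color that, thanks to the privacy of color sets asserted throughout the invariants, is globally unique; hence the coloring is conflict-free.
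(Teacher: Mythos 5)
Your proposal is correct and takes essentially the same approach as the paper: the paper proves this lemma only by remarking that the argument of Theorem~\ref{thm:insertion-only} carries over with sets marked \emph{non-empty} playing the role of sets marked \emph{full}, and your case analysis --- (Inv-C-NonEmp) for non-empty sets, then for sets in migration the split into ``some covering object is in $\objects_i^*$'' (handled by the Fact via (Inv-C-Mig-1) and (Inv-C-Mig-2)) versus ``all covering objects wear temporary colors'' (handled by (Inv-C-Up), resp.\ (Inv-C-Down), plus the privacy of color sets) --- is precisely that argument written out in full. No gaps.
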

Next we describe how to insert or delete an object~$\ob$.
We start with insertions.
\begin{enumerate}
\item \label{step:ins1} Let~$i$ be the smallest index such that~$\objects_i$ is empty.
      Find the smallest~$j$ such that~$\objects_1 \cup \cdots \cup \objects_{i-1}$ fit into~$\objects_j$.
      Note that $j\leq i$. If $j=\ell+1$, set $\ell:=\ell+1$.
\item \label{step:ins2} Set $\objects^*_j := \{\ob\}$, set $\objects_j^{(m)} := \objects_{m}$ for
      all~$0\leq m \leq i-1$, and set
      $\objects_j := \objects^*_j \cup \objects_j^{(0)}\cup\cdots\cup\objects_j^{(i-1)}$.
      Mark $\objects_0,\ldots,\objects_{i-1}$ as \emph{empty}, and then
      mark~$\objects_j$ as \emph{in upwards migration}.
      Mark all colors of the objects in $\objects_j^{(0)}\cup\cdots\cup\objects_j^{(i-1)}$ as temporary.
\item \label{step:unimax-2} Take an unused color set $\C(j,t)$ and compute a
      unimax coloring of~$\objects_j$ using colors from $\C(j,t)$. The color
      that an object in $\objects_j$ receives in this coloring is its \emph{final color}.
      Except for the newly inserted objects~$\ob$,
      we do not recolor any objects to their
      final color in this step; they all keep their temporary colors.
\item \label{step:ins4} For each set~$\objects_k$ in migration---this includes
      the set we just created in Step~\ref{step:unimax}---we proceed as follows.
      \begin{itemize}
      		\item If~$k<\ell$, we pick one object whose
      			  final color is different from its actual
      			  color and that has the highest
	              final colors among such objects. We recolor the object
                  so that its actual color becomes its final color,
                  and we add it to~$\objects_k^{*}$.
	        \item If~$k=\ell$, we pick two objects whose
      			  final color is different from its actual
      			  color and that have the highest
	              final colors among such objects. We recolor
                  them to their final color, and add them to~$\objects_k^{*}$.
      \end{itemize}
      In both cases we make an arbitrary choice in case of ties, and in the second case
      when only one object still needs to be recolored we just recolor that one.
      If all objects in $\objects_k$ now have their final color---thus
      $\objects_k^{*} =\objects_k$---we mark $\objects_k$ as (that is, we change its status to)
      \emph{non-empty}.
\end{enumerate}
We need the following lemma.
\begin{lemma}\label{lem:no_2_migrations-full}
Suppose that when we insert an object $\ob$ into $\objects$, the first
empty set is $\objects_i$. Then the sets $\objects_0,\ldots,\objects_{i-1}$ are marked \emph{non-empty}.
\end{lemma}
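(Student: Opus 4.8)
The plan is to mirror the proof of Lemma~\ref{lem:no_2_migrations} from the insertion-only setting, but with the additional care required because a set can now shrink due to deletions and because the last set $\objects_\ell$ is special. The key point to establish is that, at the moment of insertion, none of the sets $\objects_0,\ldots,\objects_{i-1}$ is still in migration (upwards or downwards) — if they are all in the \emph{non-empty} state the claim follows, since by assumption they are not empty (as $\objects_i$ is the \emph{first} empty set).

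First I would argue that the upwards-migration case is handled exactly as before. Suppose for contradiction that some $\objects_j$ with $j<i$ is in upwards migration, and consider the most recent insertion that created this instance of $\objects_j$ and marked it in upwards migration. At that moment all of $\objects_0,\ldots,\objects_{j-1}$ were marked empty, so it takes at least $\sum_{t=0}^{j-1} 2^t = 2^j-1$ further insertions to refill them before any set $\objects_{i'}$ with $i'>j$ can be created from them. During each of those insertions (plus the creation step itself) we recolor at least one object in $\objects_j$ toward its final color in Step~\ref{step:ins4}. Since $|\objects_j|\leq 2^j$ by the size invariant (Inv-S), these recolorings suffice to complete the migration and mark $\objects_j$ as \emph{non-empty} before any larger set draws from it. Hence $\objects_j$ cannot still be in upwards migration when we create $\objects_i$.

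Next I would dispose of the downwards-migration case. By the observation recorded just before the statement, \emph{only} $\objects_\ell$ can ever be in downwards migration. Since every index $j$ with $j<i$ satisfies $j<\ell$ (as $i\leq\ell$, so the sets strictly below $\objects_i$ are indexed below $\ell$), none of $\objects_0,\ldots,\objects_{i-1}$ is the last set, and therefore none of them can be in downwards migration. This is precisely where the fully-dynamic argument departs from the original lemma, and the structural fact ``only $\objects_\ell$ can be in downwards migration'' does all the work, so no counting is needed here.

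The main obstacle I anticipate is verifying the rate argument in the upwards-migration case under deletions: I must confirm that the bookkeeping of Step~\ref{step:ins4} still delivers at least one recoloring per insertion into $\objects_j$ and that the total number of objects ever placed into the migrating $\objects_j$ is bounded by $2^j$, so that the migration genuinely completes within the $2^j-1$ refilling insertions. Here I would lean on (Inv-S), which caps $|\objects_j|\leq 2^j$ for $j<\ell$, and on the fact that deletions only remove work rather than add migration steps to $\objects_j$. Once both cases are settled, every $\objects_j$ with $j<i$ is non-empty and non-migrating, i.e.\ marked \emph{non-empty}, which is the claim.
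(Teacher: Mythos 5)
Your treatment of the upwards-migration case is essentially the paper's own (the paper literally disposes of that case by saying it is ``similar to the proof of Lemma~\ref{lem:no_2_migrations}''), but your dismissal of the downwards-migration case rests on a false claim, and that case is exactly where the paper has to do real work. You assert that $i\leq\ell$, so that every $j<i$ satisfies $j<\ell$ and hence none of $\objects_0,\ldots,\objects_{i-1}$ can be the last set. This is wrong: in Step~\ref{step:ins1} of the insertion procedure, $i$ may equal $\ell+1$ --- this is precisely the situation in which a new level is introduced --- and in that case $\objects_\ell$ itself is one of the sets $\objects_0,\ldots,\objects_{i-1}$ about which the lemma makes a claim. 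So the observation that only $\objects_\ell$ can be in downwards migration does not make the case vacuous; it merely identifies the one set for which an argument is still required.

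The paper closes that case with a counting argument, not a structural one. When $\objects_\ell$ is marked as being in downwards migration (Step~\ref{step:del2c}), the merge of the last three sets leaves $\objects_{\ell-1}$ empty. For a later insertion to find its first empty set at index $\ell+1$, the set $\objects_{\ell-1}$ must first be re-created, which requires at least $2^{\ell-1}$ intervening insertions (its creation needs a union of more than $2^{\ell-2}$ objects together with all lower sets being non-empty, and those objects and fillings can only come from insertions performed after the merge). Each such insertion performs \emph{two} recolorings in $\objects_\ell$ by Step~\ref{step:ins4} --- this is the very reason that step recolors two objects when $k=\ell$ rather than one --- giving at least $2^{\ell}$ recolorings, which suffices to finish the migration since $|\objects_\ell|\leq 2^{\ell}$ by (Inv-S). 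Hence the downwards migration terminates before any insertion can reach index $\ell+1$, whereas your proposal explicitly claims ``no counting is needed here''; that is a genuine gap, not a stylistic difference. (Your worry about deletions speeding up the refill in the upwards case is legitimate, and the paper glosses over it too; it can be resolved by the same observation as above: re-creating any $\objects_t$ still requires all of $\objects_0,\ldots,\objects_{t-1}$ to be non-empty and a union of more than $2^{t-1}$ objects, and deletions only remove objects, so the recursive count of $2^j-1$ refilling insertions survives.)
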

\begin{proof}
Suppose for a contradiction that $\objects_k$,
for some $0\leq k<i$ is in migration.
If~$\objects_k$ is in upwards migration
then the proof is similar to the proof
of Lemma~\ref{lem:no_2_migrations}. Suppose
now~$\objects_k$ is in downwards migration (in which case,~$k=\ell$).
At the moment~$\objects_\ell$ is marked as being in downwards migration,
we know~$\objects_{\ell-1}$ is empty---see the description
of a deletion further down. Hence, at least~$2^{\ell-1}$
objects need to be inserted before an insertion can reach~$\objects_{\ell}$.
Since we do two recolorings per insertion, this implies that~$S_\ell$
is no longer in migration when~$\ob$ is inserted, yielding a contradiction.
\end{proof}
We also need the analog of Lemma~\ref{lem:forr-color-set}.
\begin{lemma}
In Step~\ref{step:unimax} of the insertion procedure,
at least one of the color sets $\C(j,t)$ with $0\leq t\leq \ell+1$ is currently unused.
\end{lemma}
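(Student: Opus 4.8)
The plan is to bound the number of color sets $\C(j,t)$ that can be simultaneously in use when we create $\objects_j$ in Step~\ref{step:unimax}, and to show this number is at most $\ell+1$, so that one of the $\ell+2$ available sets must be free. A color set $\C(j,t)$ of size $\cf(2^j)$ can only be occupied by a set $\objects_{i'}$ (in whatever state) that is currently holding objects colored with colors at ``level~$j$''. First I would enumerate the ways a level-$j$ color set can be blocked: either some non-empty or in-migration set $\objects_{i'}$ is using $\C(j,t)$ as its \emph{global} (final-color) set, or some set $\objects_{i'}$ in upwards migration has a subset $\objects_{i'}^{(j)}$ using $\C(j,t)$ as a \emph{local} (temporary-color) set via (Inv-C-Up), or $\objects_\ell$ is in downwards migration and uses $\C(j,t)$ inside the coloring of $\objects_\ell^{(\ell'-2)}$ or $\objects_\ell^{(\ell'-1)}$ via (Inv-C-Down).

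The heart of the argument is the same observation used in Lemma~\ref{lem:forr-color-set}, combined with Lemma~\ref{lem:no_2_migrations-full}: by the latter, when we are about to create $\objects_j$, all sets $\objects_0,\ldots,\objects_{i-1}$ are non-empty (not in migration), where $\objects_i$ is the first empty set. So the sets that can block a level-$j$ color set are the non-empty sets and at most the sets in migration above index~$i$. I would argue that each index $i'$ contributes at most a bounded number of blocked level-$j$ color sets: a non-empty $\objects_{i'}$ blocks one global color set; a set $\objects_{i'}$ in upwards migration blocks its own global set plus, through its local colorings, at most one color set per level it draws from, but crucially each distinct index $i'$ in upwards migration was itself full (non-empty) when absorbed, so as in Lemma~\ref{lem:forr-color-set} it blocks only a single level-$j$ set. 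The downwards migration of $\objects_\ell$ is the new case to handle: by (Inv-C-Down) it uses at most one color set $\C(m,t)$ per level $m$ for each of $\objects_\ell^{(\ell'-2)}$ and $\objects_\ell^{(\ell'-1)}$, so it may block up to two level-$j$ sets simultaneously.

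Counting over all indices, I would show the total number of in-use level-$j$ color sets is at most $\ell+1$. The clean way is to charge each blocked color set to a distinct source: one per non-empty or in-upwards-migration set at each index (giving at most one per index $i' \in \{0,\dots,\ell\}$, hence at most $\ell+1$ sources, but since $\objects_j$ itself is empty we save one), plus the possible extra blockage from the single downwards migration of $\objects_\ell$. The extra allowance of $\ell+2$ color sets per level (rather than $\ell-i+1$ as in the insertion-only case) is precisely what absorbs the second block coming from downwards migration and the relaxed (Inv-S) size invariant. I would finish by noting that since at most $\ell+1$ of the $\ell+2$ sets $\C(j,t)$ are in use, at least one is unused, completing the proof.

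The main obstacle I anticipate is the downwards-migration bookkeeping: verifying that $\objects_\ell^{(\ell'-2)}$ and $\objects_\ell^{(\ell'-1)}$ together block at most two color sets at any single level $j$, and confirming that no index is double-counted across the global colorings of non-empty sets, the local colorings of upwards migrations, and the split colorings inside the downwards migration. I would handle this by arguing that at the moment $\objects_\ell$ entered downwards migration, $\objects_{\ell-1}$ was empty (as invoked in Lemma~\ref{lem:no_2_migrations-full}), so the indices feeding the downwards migration cannot also be independently occupied by a concurrent upwards migration at the same level, keeping the per-level count within the $\ell+2$ budget.
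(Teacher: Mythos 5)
Your overall strategy is the same as the paper's: bound the number of level-$j$ color sets in use by combining Lemma~\ref{lem:no_2_migrations-full} with the invariants (Inv-C-Up) and (Inv-C-Down), noting that each set in upwards migration blocks at most one color set $\C(j,\cdot)$ while a downwards migration of $\objects_\ell$ can block two. However, there is a genuine gap in your accounting: the crucial ``$-1$'' that you justify by ``since $\objects_j$ itself is empty'' is unfounded. In the fully-dynamic setting $j$ need not equal $i$: Step~\ref{step:ins1} picks $j$ as the \emph{smallest} index such that $\objects_0\cup\cdots\cup\objects_{i-1}$ plus the new object fits into $\objects_j$, and since deletions can leave the sets underfull, $j<i$ is possible (the paper only guarantees $j\leq i$). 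When $j<i$, the set $\objects_j$ is \emph{non-empty} just before the insertion, and by (Inv-C-NonEmp) it occupies a level-$j$ color set; after Step~\ref{step:ins2} that color set survives as the temporary coloring of $\objects_j^{(j)}$, so it is genuinely blocked and cannot be chosen as the final color set. With your charging scheme (at most one blocked level-$j$ set per index in $\{0,\dots,\ell\}$, plus one extra for the downwards migration) and no saving, the count reaches $\ell+2$, and the lemma no longer follows.

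The repair is small but necessary: the index guaranteed to contribute nothing is $i$ (the first \emph{empty} set), not $j$. Sharper, and closer to the paper's argument: a non-empty set $\objects_m$ with $m\neq j$ uses only level-$m$ colors, hence blocks nothing at level $j$, and by Lemma~\ref{lem:no_2_migrations-full} every set currently in migration has index at least $i+1>j$. So the only possible level-$j$ blockers are the old $\objects_j$ (at most one), the sets in upwards migration at indices in $\{i+1,\dots,\ell\}$ (at most one each, by (Inv-C-Up)), and $\objects_\ell$ if it is in downwards migration (at most two, by (Inv-C-Down), and only when $\objects_\ell$ is not already counted among the upwards migrations). In every case this totals at most $\ell+1$, leaving one of the $\ell+2$ sets $\C(j,t)$ free. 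Finally, your closing concern about double counting between the downwards migration and concurrent upwards migrations needs no separate argument: the color invariants already stipulate that every color set in use is used by exactly one (sub)set, i.e., ``not used elsewhere.''
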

\begin{proof}
A color set $\C(j,t)$ can only be already in use
for sets $\objects_m$ with $m\neq j$ that are in migration. By Lemma~\ref{lem:no_2_migrations-full}
we have $m>j$ for such sets. When $m<\ell$ then $\objects_m$ uses
at most one color set~$\C(j,t)$, namely for $\objects_m^{(j)}$;
see Invariant~(Inv-C-Up). Hence, there are
at most $\ell-2$ color sets~$\C(j,t)$ already in use by sets~$\objects_{m}$
with $m\neq \ell$.
It remains to consider~$\objects_{\ell}$. In fact, it suffice
to consider the case where~$\objects_\ell$ is in downwards migration
as the other cases only use at most one color set~$\C(j,t)$ for~$\objects_{\ell}$.
If~$\objects_{\ell}$ is in downwards migration,
then only~$\objects_\ell^{(\ell'-2)}$ and~$\objects_\ell^{(\ell'-1)}$
can use a color set~$\C(j,t)$; see (Inv-C-Down). Then, the total number
of color sets~$\C(j,t)$ used is at most~$\ell-2+2=\ell$, leaving at least
one unused color set.
\end{proof}
We can now prove the correctness of the insertion procedure.
\begin{lemma}\label{le:insertion-inv}
The insertion procedure maintains all size and color invariants.
\end{lemma}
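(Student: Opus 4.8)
The plan is to argue inductively: assuming all size and color invariants hold before the insertion, I verify that Steps~\ref{step:ins1}--\ref{step:ins4} restore each of them. Two facts are available for free: Lemma~\ref{lem:no_2_migrations-full}, which tells us that the sets $\objects_0,\ldots,\objects_{i-1}$ emptied in Step~\ref{step:ins2} were all marked \emph{non-empty} beforehand (so by (Inv-C-NonEmp) each carried a unimax coloring on its own private color set), and the preceding lemma, which guarantees that an unused color set $\C(j,t)$ exists in Step~\ref{step:unimax-2}.

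For the size invariant (Inv-S): since (Inv-S) keeps $|\objects_\ell|\geq 2^{\ell-2}>0$, the top set is never empty, so the first empty index satisfies either $i\leq\ell-1$ or $i=\ell+1$. If $i\leq\ell-1$, the count of objects merged into $\objects_j$ is $1+\sum_{m=0}^{i-1}|\objects_m|\leq 1+\sum_{m=0}^{i-1}2^m=2^i$, so the smallest $j$ for which $2^j$ is at least this count satisfies $j\leq i\leq\ell-1$; then $|\objects_j|\leq 2^j$ is exactly the required upper bound and $\objects_\ell$ is untouched. If $i=\ell+1$ (every set non-empty), I carry out a short case distinction on $j$: if the merged count exceeds $2^\ell$ then $j=\ell+1$, the procedure sets $\ell:=\ell+1$, and the new top set has size in $(2^\ell,2^{\ell+1}]\subseteq[2^{(\ell+1)-2},2^{\ell+1}]$; if the count lies in $(2^{\ell-1},2^\ell]$ then $j=\ell$ and $\objects_\ell$ is re-created within $[2^{\ell-2},2^\ell]$. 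The delicate point I treat explicitly is to rule out that the merge collapses into a strictly smaller set and thereby empties $\objects_\ell$; this is exactly where the lower bound of (Inv-S) on $|\objects_\ell|$ and the accompanying adjustment of $\ell$ are needed.

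Next I verify the color invariants for the set $\objects_j$ created in Steps~\ref{step:ins2}--\ref{step:unimax-2}. Step~\ref{step:unimax-2} gives $\objects_j$ a unimax coloring on the unused color set $\C(j,t)$ supplied by the preceding lemma, which establishes (Inv-C-Mig-1). Each subset $\objects_j^{(m)}:=\objects_m$ retains the unimax coloring it already had as a non-empty set on its private color set $\C(m,t')$; since $\objects_m$ is simultaneously emptied, that color set is freed by $\objects_m$ exactly as it is claimed by $\objects_j^{(m)}$, so it remains used nowhere else and (Inv-C-Up) holds with these colors reinterpreted as temporary. Because $\objects_j^*=\{\ob\}$ right after Step~\ref{step:ins2}, choosing $z$ to be the final color of $\ob$ (and regarding $\ob$ as the single ``other object'' of (Inv-C-Mig-2) when its final color is not maximal) establishes (Inv-C-Mig-2) for $\objects_j$. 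Note that when $j\leq i-1$ the old $\objects_j$ is itself one of the emptied sets and survives as $\objects_j^{(j)}$; its old color set $\C(j,\cdot)$ is distinct from the newly taken $\C(j,t)$, which is precisely why several color sets per index are needed.

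Finally I treat the sets advanced in Step~\ref{step:ins4}. For a set $\objects_k$ in migration we recolor the object(s) of maximal final color that do not yet carry it and add them to $\objects_k^*$; since we proceed in decreasing order of final color, this merely lowers the threshold $z$ of (Inv-C-Mig-2) by one color value (or absorbs one more object of color $z$), so (Inv-C-Mig-2) is preserved, while (Inv-C-Mig-1), (Inv-C-Up), and (for the downwards-migrating $\objects_\ell$) the (Inv-C-Down) temporary structure are all untouched. When the last object of $\objects_k$ is recolored we have $\objects_k^*=\objects_k$, every object carries its final unimax color on a color set used nowhere else, and marking $\objects_k$ \emph{non-empty} restores (Inv-C-NonEmp); this is the exact analogue, via the fact used in the proof of Theorem~\ref{thm:insertion-only}, of the argument there. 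The main obstacle throughout is the bookkeeping that every invoked color set stays ``not used elsewhere'': this combines the counting bound of the preceding lemma (at most $\ell$ of the $\ell+2$ sets $\C(j,t)$ are ever simultaneously in use) with the observation that emptying a set frees its color set in lockstep with the subset that inherits it, so no two live colorings ever share a color set.
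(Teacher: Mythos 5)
Your handling of the color invariants tracks the paper's own proof almost exactly, just in more detail: (Inv-C-Mig-1)/(Inv-C-Mig-2) for the new $\objects_j$ come from Steps~\ref{step:ins2}--\ref{step:unimax-2}, (Inv-C-Up) from the fact that the absorbed sets were \emph{non-empty} (Lemma~\ref{lem:no_2_migrations-full}) and hand their private color sets over to the subsets $\objects_j^{(m)}$, (Inv-C-NonEmp) is restored when a migration completes in Step~\ref{step:ins4}, and (Inv-C-Down) is untouched because insertions never start a downwards migration. (One minor slip: to establish (Inv-C-Mig-2) you should take $z$ to be the \emph{maximum} final color in $\objects_j$, with $\ob$ playing the role of the permitted ``other object''; with $z$ equal to $\ob$'s own final color, the clause requiring \emph{all} objects of final color greater than $z$ to lie in $\objects_j^*$ fails whenever $\ob$'s color is not maximal.)

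The genuine gap is in your (Inv-S) analysis, at exactly the point you flag as delicate. The lower bound $|\objects_\ell|\geq 2^{\ell-2}$ does \emph{not} rule out the merge collapsing below index $\ell$: it only forces the merged count above $2^{\ell-2}$, hence $j\geq \ell-1$, and $j=\ell-1$ really occurs. Concretely, suppose $\ell=5$, $|\objects_0|=\dots=|\objects_4|=1$ and $|\objects_5|=8=2^{\ell-2}$, with all sets marked \emph{non-empty} and none in migration; this state is reachable (build the sets by insertions, then weakly delete down to these sizes, never letting $|\objects_5|$ drop below $2^{\ell-2}$, so Step~\ref{step:del2c} never fires). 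Inserting one object gives $i=\ell+1$ and merged count $14\leq 2^4$, so $j=4=\ell-1$; Step~\ref{step:ins2} then marks $\objects_5$ \emph{empty} while $\ell$ remains $5$, violating (Inv-S). Your appeal to an ``accompanying adjustment of $\ell$'' has no basis in the procedure: Step~\ref{step:ins1} only ever \emph{increases} $\ell$ (when $j=\ell+1$); a downward adjustment exists only in the deletion procedure. So a correct proof cannot assert this case away: one must either prove such configurations unreachable (they are not) or amend the procedure, e.g.\ by setting $\ell:=j$ whenever the old $\objects_\ell$ is absorbed into a set of smaller index, after which your size computation goes through since the merged count lies in $(2^{j-1},2^j]$. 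To be fair, the paper's own proof dismisses (Inv-S) with ``easy to check'' and overlooks the same corner case, so you have located a real soft spot---but claiming it is ruled out is not a proof.
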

\begin{proof}
It is easy to check that the size invariant is maintained.
Color invariant~(Inv-C-NonEmp) remains true because we only create sets marked
\emph{non-empty} in Step~\ref{step:ins4} and when we do they are unimax-colored.
Invariant~(Inv-C-Mig-1) still holds as well, because we only create
a new set in migration in Step~\ref{step:ins2} and then
in Step~\ref{step:unimax-2} we generate a unimax coloring for it using a set
not used elsewhere. Invariant~(Inv-C-Mig-2) holds because of the
way Step~\ref{step:ins4} works.
We only need to check Invariant~(Inv-C-Up) for the set $\objects_j$
that is marked as being in upwards migration in Step~\ref{step:ins2},
and there it holds because (Inv-C-NonEmp) holds before
the insertion. Finally, (Inv-C-Down) cannot be violated because
it holds before the insertion and
our insertion algorithm does not mark a set as being in downwards migration.
\end{proof}
Next we describe the deletion of an object~$\ob$.
\begin{enumerate}
\item Let~$i$ be such that~$\ob\in \objects_i$.
\item \begin{enumerate}
      \item\label{step:weak-del} If~$i\neq \ell$ or ($i=\ell$ and~$|\objects_\ell| > 2^{\ell-2}$),
             and in addition~$\objects_i$ is not
            in migration, then do a weak deletion of~$\ob$ in~$\objects_i$.
            Mark $\objects_i$ as empty if applicable.
      \item \label{step:2b}
            If~$i\neq \ell$ or ($i=\ell$ and~$|\objects_\ell| > 2^{\ell-2}$), and in addition~$\objects_i$ is in migration,
            then do the following.
            \begin{itemize}
            \item Do a weak deletion of~$\ob$ in~$\objects_i^{(m)}$,
                  where~$\objects_i^{(m)}$ is the subset of~$\objects_i$ containing~$\ob$.
                  Note that this involves changing the actual color of
                  at most~$r(2^m)$ of the objects
                  in~$\objects_i^{(m)} \setminus \objects_i^*$.
            \item Do a weak deletion of~$\ob$ on $\objects_i$, thus changing the final color
                  of at most $r(2^i)$ objects. Note that this may break invariant~(Inv-C-Mig-2).
                  Repair (Inv-C-Mig-2) by removing at most $r(2^i)$ objects from $\objects_i^*$ and putting at most $r(2^i)$ other objects
                  from $\objects_i$ into $\objects_i^*$ instead.
                  We do this such that the size of $\objects_i^*$ does not change.
                  Observe that objects added to $\objects_i^*$  are recolored to their final color, while objects removed from $\objects_i^*$  are recolored to
                  their temporary color.
            \end{itemize}
             Mark~$\objects_i$ as empty if applicable.
      \item \label{step:del2c} Otherwise we have $i=\ell$ and $|\objects_{\ell}|=2^{\ell-2}$, so the deletion of~$\ob$
            breaks the condition on the size of $\objects_\ell$.
            In this case we merge the last three
            sets~$\objects_{\ell-2}, \objects_{\ell-1},\objects_\ell$, as follows.
            Set $\ell' := \ell$.
            If the three sets~$\objects_{\ell-2}, \objects_{\ell-1},\objects_\ell$
            together fit into $\objects_{\ell-1}$ then set $\ell := \ell-1$, otherwise
            keep $\ell$ as it is.
            \begin{itemize}
            \item If~$\objects_{\ell'-2}$ is \emph{empty} or \emph{non-empty},
              set $\objects_{\ell}^{(\ell'-2)}:=\objects_{\ell'-2}$; otherwise,
              set $\objects_{\ell}^{(\ell'-2)}:= \cup_{i=0}^{\ell'-2} \objects_{\ell'-2}^{(i)}$.
              Note that in the later case,~$\objects_{\ell}^{(\ell'-2)}$
              is now using at most~$\ell'-2$ color sets, namely
              at most one color set $\C(m,t)$ for each $m \leq \ell'-2$.
            \item If~$\objects_{\ell'-1}$ is \emph{empty} or \emph{non-empty},
              set $\objects_{\ell}^{(\ell'-1)}:=\objects_{\ell'-1}$; otherwise,
              set $\objects_{\ell}^{(\ell'-1)}:= \cup_{i=0}^{\ell'-1} \objects_{\ell'-1}^{(i)}$.
              Note that in the later case,~$\objects_{\ell}^{(\ell'-1)}$
              is now using at most~$\ell'-1$ color sets, namely
              at most one color set $\C(m,t)$ for each $m\leq \ell'-2$.
            \item Set~$\objects_{\ell}^{(\ell')}:=\objects_{\ell'}$
            and~$\objects_{\ell}^*:=\emptyset$.
            \end{itemize}
            Now compute the final coloring on $\objects_\ell$
            using an unused color set~$\C(\ell,t)$ and
            mark~$\objects_\ell$ as being in downwards migration.
       \end{enumerate}
\item\label{step:del-recol} If~$i= \ell$, do two recolorings in~$\objects_\ell$
starting from the highest final colors and add the two recolored
objects to~$\objects_\ell^*$. If only one such object remains,
do the recoloring only on that one.
If~$\objects_\ell^*=\objects_\ell$, mark~$\objects_\ell$
non-empty and free the unused color sets~$\C(\ell,t)$.
\end{enumerate}

\begin{lemma} \label{le:delete}
Suppose that upon deletion of~$\ob$ from~$\objects_\ell$, Step~\ref{step:del2c}
applies and thus we mark~$\objects_\ell$ as in downwards migration.
Then~$\objects_\ell$ was not in migration just before~$\ob$ is deleted.
\end{lemma}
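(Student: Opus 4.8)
The plan is to prove the contrapositive in a quantitative form: I will show that whenever $\objects_\ell$ is in migration, its size satisfies $|\objects_\ell| \geq 2^{\ell-2}+1$. Since Step~\ref{step:del2c} is entered exactly when $i=\ell$ and $|\objects_\ell| = 2^{\ell-2}$, this immediately gives that $\objects_\ell$ cannot be in migration at that moment. Fix the current (continuous) migration of $\objects_\ell$ and let $S_0$ denote the size of $\objects_\ell$ at the operation that triggered it; during a single migration $\ell$ does not change, since a change of $\ell$ is caused either by the creation of a new set or by entering a downward migration, neither of which happens while $\objects_\ell$ is already migrating. Two ingredients drive the argument: a lower bound $S_0 \geq 2^{\ell-1}-1$ on the starting size, and the fact that the migration makes progress at a rate of two recolorings per relevant operation.

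For the starting size I distinguish the three ways $\objects_\ell$ can enter migration. If the migration is upward, then it was created in Step~\ref{step:ins2} with index $j=\ell$, and since $j$ is chosen minimal the merged objects do not fit into $\objects_{\ell-1}$, so $S_0 > 2^{\ell-1}$. If the migration is downward and $\ell$ is left unchanged in Step~\ref{step:del2c} (the three sets do not fit into $\objects_{\ell-1}$), then again $S_0 > 2^{\ell-1}$. The only delicate case is a downward migration in which $\ell$ decreases by one; here the new $\objects_\ell$ absorbs the old $\objects_{\ell'}$, whose size is $2^{\ell'-2}-1$ after deleting~$\ob$, so $S_0 \geq 2^{\ell'-2}-1 = 2^{\ell-1}-1$. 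Hence in all cases $S_0 \geq 2^{\ell-1}-1$.

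For the progress, let $R := |\objects_\ell \setminus \objects_\ell^*|$ be the number of objects of $\objects_\ell$ that do not yet carry their final color; the migration ends precisely when $R=0$. The triggering operation itself performs two recolorings in the newly migrating set (Step~\ref{step:ins4} for an upward migration, Step~\ref{step:del-recol} for a downward one), so immediately afterwards $R \leq S_0 - 2$. While $|\objects_\ell|>2^{\ell-2}$, the size of $\objects_\ell$ can only shrink, and only through a deletion of an object of $\objects_\ell$; each such deletion falls into Step~\ref{step:2b} and is followed by two recolorings in Step~\ref{step:del-recol}, so it decreases $|\objects_\ell|$ by exactly one and decreases $R$ by at least two (the weak-deletion bookkeeping keeps $|\objects_\ell^*|$ unchanged and can only lower $R$ further). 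Insertions and deletions from other sets never increase $|\objects_\ell|$. Consequently, once $|\objects_\ell|$ has dropped from $S_0$ to some value $s$, we have $R \leq (S_0-2) - 2(S_0-s) = 2s - S_0 - 2$.

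Combining the two ingredients finishes the proof. If $\objects_\ell$ were still in migration when $|\objects_\ell| = 2^{\ell-2}$, then substituting $s = 2^{\ell-2}$ into this bound gives $R \leq 2^{\ell-1} - S_0 - 2 \leq 2^{\ell-1} - (2^{\ell-1}-1) - 2 = -1 < 0$, which is impossible since $R \geq 0$. Hence the migration must have reached $R=0$ and finished strictly before this point, so $\objects_\ell$ is not in migration just before $\ob$ is deleted. I expect the main obstacle to be the downward-migration case in which $\ell$ decreases: there the starting size attains its minimum $2^{\ell-1}-1$, and only by carefully counting the two recolorings already performed by the triggering operation does the boundary value $|\objects_\ell| = 2^{\ell-2}$ get excluded.
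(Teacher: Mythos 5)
Your proof takes essentially the same route as the paper's: lower-bound the size of $\objects_\ell$ at the moment its migration starts (via the minimal choice of $j$ for an upward migration, and the fitting test of Step~\ref{step:del2c} for a downward one), then argue that shrinking down to $2^{\ell-2}$ forces so many deletions from $\objects_\ell$---each triggering two recolorings in Step~\ref{step:del-recol}---that the migration must already have finished. Your write-up is in fact more careful than the paper's: you isolate the decrement case, where the starting size can be as small as $2^{\ell-1}-1$ (the paper claims $\geq 2^{\ell-1}$ in all cases), and your exact bookkeeping of the number of objects still awaiting their final color closes the resulting off-by-one, whereas the paper settles for a rougher count (its ``$2^{\ell}$ recolorings'' from $2^{\ell-2}-1$ deletions is an arithmetic slip; the correct figure is $2^{\ell-1}-2$, which suffices only with your sharper accounting).
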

\begin{proof}
Suppose for a contradiction that~$\objects_\ell$ is in migration.
Let~$\ob'$ be the last object that caused a migration on~$\objects_\ell$.
If $\ob'$ was being inserted then $|\objects_\ell|\geqslant 2^{\ell-1}$, because
of the choice of~$j$ in Step~\ref{step:ins1} of the insertion algorithm.
If $\ob'$ was being deleted then we also have $|\objects_\ell|\geqslant 2^{\ell-1}$,
because of (Inv-S) and because we decrement $\ell$
in Step~\ref{step:del2c} of the deletion procedure when ~$\objects_{\ell-2}, \objects_{\ell-1},\objects_\ell$ fit into $\objects_{\ell-1}$.
Hence, before~$\ob$ is deleted, at least~$2^{\ell-2}-1$ other objects
have been deleted, generating $2^{\ell}$ recolorings
for~$\objects_\ell$. This is enough to recolor
all objects in~$\objects_\ell$, contradicting that $\objects_\ell$ is still in migration.
\end{proof}
\begin{lemma}\label{le:deletion-col}
When we compute the final coloring of~$\objects_{\ell}$
in Step~\ref{step:del2c} of the deletion procedure,
at least one of the color sets $\C(\ell,t)$ with $0\leq t\leq \ell+1$
is currently unused.
\end{lemma}
\begin{proof}
Before the deletion that caused the move, thanks to Lemma~\ref{le:delete}
and the fact that only the last set can be in downwards migration,
no set is in downwards migration. Therefore,
before the deletion, each set~$\objects_0,\ldots,\objects_\ell$
uses at most one color set~$\C(\ell,t)$. Hence at
most~$\ell+1$ colors sets are being used, leaving
at least one color set available.
\end{proof}
\begin{lemma}\label{le:deletion-inv}
The deletion procedure maintains all size and color invariants.
\end{lemma}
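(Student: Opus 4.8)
The plan is to prove Lemma~\ref{le:deletion-inv} by verifying each size and color invariant separately, case-splitting according to which branch (\ref{step:weak-del}, \ref{step:2b}, or \ref{step:del2c}) of the deletion procedure was executed. The size invariant (Inv-S) is the easiest: in the first two branches we only remove an object, which cannot violate $|\objects_i|\leq 2^i$; in the third branch I would use the merging rule together with the possible decrement of $\ell$ to check that the new last set still satisfies $2^{\ell-2}\leq|\objects_\ell|\leq 2^\ell$. The key observation is that Step~\ref{step:del2c} only fires when $|\objects_\ell|=2^{\ell-2}$, and after merging in $\objects_{\ell-2},\objects_{\ell-1}$ (each of size at most its bound) the combined set has the right size relative to the new value of $\ell$.

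Next I would handle the non-migration color invariant (Inv-C-NonEmp). In branch~\ref{step:weak-del} the set $\objects_i$ is not in migration, so the \emph{weak deletion} property of the static coloring guarantees the result is still a unimax coloring with $\cf(2^i)$ colors on the same color set $\C(i,t)$; since the color set remains assigned to $\objects_i$ alone, no conflict with other sets arises. The migration invariants are where the real work lies. For (Inv-C-Mig-1) I would note that branch~\ref{step:2b} performs a weak deletion directly on the global coloring of $\objects_i$, which by assumption keeps it unimax on the same color set. For (Inv-C-Mig-2), the weak deletion may change up to $r(2^i)$ final colors and hence disturb the threshold structure around~$z$; the procedure explicitly repairs this by swapping $O(r(2^i))$ objects in and out of $\objects_i^*$ while keeping $|\objects_i^*|$ fixed, so I would just argue that after this repair $\objects_i^*$ again contains all objects above some threshold $z$, some at $z$, and at most one extra, which is exactly the form the invariant requires. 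The recoloring step~\ref{step:del-recol}, which advances the migration by two objects in decreasing order of final color, preserves (Inv-C-Mig-2) by the same monotone-threshold reasoning used in Lemma~\ref{lem:no_2_migrations}.

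The most intricate part, and the main obstacle, will be establishing (Inv-C-Up) and (Inv-C-Down) after a merge in branch~\ref{step:del2c}. Here I would rely on Lemma~\ref{le:delete}, which tells us $\objects_\ell$ was \emph{not} in migration before the merge, so each of $\objects_{\ell-2},\objects_{\ell-1},\objects_\ell$ was either empty, non-empty, or in upwards migration. I would then check case by case how each becomes one of the new subsets $\objects_\ell^{(\ell'-2)},\objects_\ell^{(\ell'-1)},\objects_\ell^{(\ell')}$, counting the color sets inherited from their internal local colorings: a set that was in upwards migration contributes at most one $\C(m,t)$ per level $m$, which is precisely what (Inv-C-Down) budgets. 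The delicate accounting is verifying that the union used to build $\objects_\ell^{(\ell'-2)}$ remains a valid unimax coloring assembled from disjoint subsubsets—each on its own color set—and that weak deletions can still be performed on it by routing them to the relevant subsubset, as noted just before Lemma~\ref{le:fully-dynamic-conflict-free}. Finally, Lemma~\ref{le:deletion-col} supplies the unused color set $\C(\ell,t)$ needed to install the new global coloring in Step~\ref{step:del2c}, so (Inv-C-Mig-1) holds for the freshly created downwards migration, completing the verification.
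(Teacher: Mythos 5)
Your proposal is correct and follows essentially the same route as the paper's proof: an invariant-by-invariant verification split over the three deletion branches, with (Inv-C-NonEmp) and the migration invariants maintained by construction in Steps~\ref{step:weak-del} and~\ref{step:2b}, and (Inv-C-Down) established via Lemma~\ref{le:delete} together with the invariants holding before the deletion, using Lemma~\ref{le:deletion-col} for the fresh color set. One tiny slip: the monotone-threshold fact you invoke for Step~\ref{step:del-recol} lives in the proof of Theorem~\ref{thm:insertion-only} (the ``Fact'' about unimax subsets), not in Lemma~\ref{lem:no_2_migrations}, but this does not affect the argument.
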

\begin{proof}
Invariant~(Inv-S) is maintained since when the
set~$\objects_\ell$ becomes too small, Step~\ref{step:del2c} redistributes
the last three sets such that the invariant holds again.
Invariant~(Inv-C-NonEmp) is maintained by construction in
Step~\ref{step:weak-del}. Invariants~(Inv-C-Mig-1), (Inv-C-Mig-2)
and (Inv-C-Up) are maintained
by construction in Step~\ref{step:2b}.
We now argue that (Inv-C-Down) is maintained. It is obvious that
the first part of (Inv-C-Down) is maintained. The second and third part are
maintained because~(Inv-C-NonEmp) and~(Inv-C-up) hold before the deletion;
(Inv-C-NonEmp) is needed when $\objects_{\ell'-2}$ resp.~$\objects_{\ell'-1}$
are \emph{empty} or \emph{non-empty}, otherwise we need~(Inv-C-up).
The last part is maintained because before the deletion,
due to Lemma~\ref{le:delete} and the fact Invariant~(Inv-C-NonEmp)
holds before the deletion.
\end{proof}

We obtain the following result.
\fi  
\begin{theorem}\label{thm:g_i+d}
Let $\family$ be a family of objects such that any subset of $n$
objects from~$\family$ admits a unimax coloring with $\cf(n)$ colors
and that allows weak deletions at the cost of $r(n)$ recolorings,
where $\cf(n)$ and $r(n)$ are non-decreasing.
Then we can maintain a CF-coloring on a set $\objects$  of objects from
$\family$ under insertions and deletions, such that the number of used colors
is~$O(\sum_{i=0}^{k}\cf(2^i)\log n)$, where $k=\Theta(\log n)$.
The number of recolorings per insertion is $O(\log n)$,
and the number of recolorings per deletion is $O(r(8n)+1)$,
where~$n$ is the current number of objects in~$\objects$.
\end{theorem}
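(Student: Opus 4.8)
The plan is to establish Theorem~\ref{thm:g_i+d} in three parts: the color count, the recolorings per insertion, and the recolorings per deletion, with correctness already handed to us by Lemma~\ref{le:fully-dynamic-conflict-free} once we verify (via Lemmas~\ref{le:insertion-inv} and~\ref{le:deletion-inv}) that both update procedures preserve all invariants. First I would count colors. For each index $i=0,\ldots,\ell+1$ we reserve $\ell+2$ color sets, each of size $\cf(2^i)$. Summing, the total is $\sum_{i=0}^{\ell+1}(\ell+2)\cf(2^i) = O\bigl(\log n \cdot \sum_{i=0}^{\ell+1}\cf(2^i)\bigr)$, and since $\ell=\Theta(\log n)$ by the second part of (Inv-S), this matches the claimed bound $O\bigl(\sum_{i=0}^{k}\cf(2^i)\log n\bigr)$ with $k=\Theta(\log n)$ (absorbing the $\ell+1$ index into the constant in $k$).

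Next I would bound the recolorings per insertion. In Step~\ref{step:unimax-2} only the newly inserted object receives its final color immediately, so that contributes $O(1)$. In Step~\ref{step:ins4} we perform at most one recoloring for each set in upwards migration (and two for $\objects_\ell$ if it is in migration). Since there are $\ell+1=O(\log n)$ sets and each contributes $O(1)$ recolorings, the total is $O(\log n)$ per insertion. The more delicate bound is the deletions. I would split into the cases of Step~2. In Steps~\ref{step:weak-del} and~\ref{step:2b} a weak deletion on $\objects_i$ costs $r(2^i)\leq r(2^\ell)=r(O(n))$ recolorings, and the extra bookkeeping to repair (Inv-C-Mig-2) in Step~\ref{step:2b} moves at most $r(2^i)$ objects in and out of $\objects_i^*$, contributing another $O(r(2^i))$. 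Step~\ref{step:del-recol} adds a constant. Since $2^\ell\leq 4\cdot 2^{\ell-2}\leq 4|\objects_\ell|\leq 4n$ by (Inv-S), we have $r(2^i)\leq r(2^\ell)\leq r(4n)$, and a deletion handled by these steps costs $O(r(4n)+1)$.

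The main obstacle, which I expect to require the most care, is the downwards-migration case of Step~\ref{step:del2c}: here we must recompute a unimax coloring on the merged set $\objects_\ell$ and then amortize that work over the subsequent updates rather than charging it all to one deletion. The key is that the recoloring in Step~\ref{step:del2c} itself only assigns the \emph{final} coloring (which is computed but not applied to actual colors except through the two recolorings per update in Step~\ref{step:del-recol}); the actual recoloring is spread over the next $2^{\ell-2}$ updates via the two-per-update rule, exactly mirroring the amortization argument already used in Lemma~\ref{le:delete}. Thus the per-deletion charge for Step~\ref{step:del2c} is $O(1)$ actual recolorings (the two in Step~\ref{step:del-recol}) plus the weak-deletion cost. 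The only subtlety is that when a set $\objects_{\ell'-2}$ or $\objects_{\ell'-1}$ that is itself in migration gets merged, the existing local colorings of its subsets $\objects_{\ell'-2}^{(i)}$ are reused wholesale without recoloring, so no additional recolorings are incurred there; the bound $r(8n)$ (rather than $r(4n)$) arises because just before a downwards migration $|\objects_\ell|$ could be as large as $2^\ell$, and after the merge and the decrement of $\ell$ the relevant set sizes are bounded by $8$ times the new value of $n$, so weak deletions on the merged structure cost at most $r(8n)$. Combining all cases yields $O(r(8n)+1)$ recolorings per deletion, completing the proof.
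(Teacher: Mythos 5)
Your proposal is correct and takes essentially the same route as the paper's own proof: correctness is delegated to Lemma~\ref{le:fully-dynamic-conflict-free} via Lemmas~\ref{le:insertion-inv} and~\ref{le:deletion-inv}, the color count is the identical sum $\sum_{i=0}^{\ell+1}(\ell+2)\cf(2^i)$, and the recoloring bounds use the same per-set accounting, with the factor $8$ arising exactly as you indicate from $2^{\ell'}\leq 2^{\ell+1}\leq 8\cdot 2^{\ell-2}\leq 8n$ via (Inv-S). Your extra discussion of amortizing the downwards migration over subsequent updates is consistent with (and implicit in) the paper's two-recolorings-per-update rule, so there is no gap.
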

\ifShort \else
\begin{proof}
The correctness of our insertion and deletion procedures follows from
Lemma~\ref{le:fully-dynamic-conflict-free} together with Lemmas~\ref{le:insertion-inv}
and~\ref{le:deletion-inv}.
The number of colors used is at most~$\sum_{i=0}^{\ell+1} (\ell+2)\cf(2^i)$,
where $\ell = \Theta(\log n)$.
Since $\cf(n)$ is non-decreasing,
this is bounded by~$O(\sum_{i=0}^{\ell}\cf(2^i)\log n)$.
The number of recolorings
per insertion is at most~$2$ per set $\objects_i$,
so $O(\log n)$ in total and at most~$2r(2^{\ell'})$ per deletion for
at most three sets, so $O(r(8n))$ in total.
\end{proof} \fi  

\mypara{Application: Points with Respect to Rectangles.}
We now make use of Theorem~\ref{thm:g_i+d} to maintain a CF-coloring
of points with respect to rectangles. But first we present a
simple technique to color points with respect to intervals in ~$\Reals^1$,
which we use as a subroutine.

\begin{lemma}\label{lemma:pwrtint}
We can maintain a unimax coloring of~$n$ points in $\Reals^1$ with respect
to intervals under deletions, using $\lceil \log n_0 \rceil$ colors and
at the cost of one recoloring per deletion. Here $n_0$ is the initial
number of points.
\end{lemma}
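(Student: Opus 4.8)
The plan is to combine a standard static unimax coloring of points with respect to intervals with a deletion rule that recolors only a single point. For the static part, I would sort the points $p_1 < \cdots < p_{n_0}$ and color them recursively: give the point at the median position the highest color and recurse on the points to its left and to its right using the remaining (lower) colors. Equivalently, store the points in a balanced binary search tree and let $\col(p)$ be the height of the node holding $p$. This uses $\lceil \log n_0\rceil$ colors, and it is unimax: for any interval $I$ the points of $I$ occupy a contiguous range of the sorted order, their lowest common ancestor $w$ stores a point lying in $I$, and every other point of $I$ is a proper descendant of $w$ and hence has a strictly smaller color. Thus the maximum color inside $I$ is attained uniquely, which is exactly the unimax property.

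For deletions, the key observation is that removing $p$ can only destroy the unimax property for intervals in which $p$ was itself the unique maximum. So let $c:=\col(p)$, and let $L$ (resp.\ $R$) be the nearest point to the left (resp.\ right) of $p$ whose color \emph{exceeds} $c$, if it exists. Applying unimax to the interval spanned by $(L,R)$ shows that $p$ is the \emph{only} point of color $c$ strictly between $L$ and $R$, and every other point there has color $<c$. Let $q_\ell$ and $q_r$ be the (unique) points of maximum color among the points strictly between $L$ and $p$, and strictly between $p$ and $R$, respectively. The deletion rule is then: remove $p$; if $\col(q_\ell)=\col(q_r)$, recolor one of them—say $q_\ell$—to color $c$; otherwise do nothing. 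This is at most one recoloring, it reuses a color already in the palette, and it never increases the number of colors in use.

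The correctness argument, and the step I expect to be the main obstacle, is to show that this single recoloring restores unimax globally, i.e.\ that promoting $q_\ell$ to color $c$ does not create a new conflict for some larger interval. I would argue by cases on an arbitrary interval $I$ over the surviving points. If $I$ lies strictly between $L$ and $R$, then after the fix $q_\ell$ is the unique point of color $c$ there while all others have color $<c$, and a short check—splitting $I$ at $q_\ell$ and using that the two sides were unimax sub-colorings—shows its maximum is unique. If instead $I$ reaches $L$ or $R$ or beyond, then $I$ contains a point of color $>c$, so its maximum color exceeds $c$; since neither the deletion of $p$ nor the recoloring of $q_\ell$ touches any color above $c$, the set of points attaining this maximum is unchanged and was already unique. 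The delicate point is the first case: verifying that the promoted $q_\ell$ cannot tie with $L$ or $R$ (this is exactly why $L$ and $R$ are chosen with color strictly larger than $c$) and cannot leave a tie on the side of $p$ opposite to $q_\ell$ (handled because removing the former maximum $p$ leaves the surviving side with its own unique maximum $q_r$).

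Finally I would note the bookkeeping: the neighbors $L,R$ and the half-maxima $q_\ell,q_r$ can be located in $O(\log n_0)$ time using a balanced search tree augmented with per-subtree maximum colors, so each deletion costs $O(\log n_0)$ time and exactly one recoloring while the color count never exceeds $\lceil \log n_0\rceil$, as claimed.
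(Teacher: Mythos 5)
Your static construction is fine, but your deletion rule is incorrect, and the failure sits exactly where you flagged the ``delicate point''---except that it cannot be repaired within your scheme. The problematic intervals are those that straddle $p$'s old position but contain \emph{neither} $q_\ell$ nor $q_r$: before the deletion, any interval containing points on both sides of $p$ necessarily contained $p$ itself, so the unimax property never constrained these ``cross'' point sets, and after the deletion they become genuine intervals with possibly tied maxima at colors far below $\col(q_\ell)$ and $\col(q_r)$. Your case analysis only argues about intervals containing the promoted point $q_\ell$ (or intervals reaching $L$ or $R$), so these intervals are never handled. Concretely, take seven points $a_1<\cdots<a_7$ with the coloring $0,1,0,2,0,1,0$ produced by your own median recursion, and delete $p=a_4$ with $c=2$. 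Then $L$ and $R$ do not exist, $q_\ell=a_2$ and $q_r=a_6$ both have color $1$, so your rule recolors $a_2$ to $2$, giving the sequence $0,2,0,0,1,0$ on the survivors. The interval $[a_3,a_5]$ now contains exactly the two points $a_3$ and $a_5$, both of color $0$: the maximum is tied, and in fact no point in that interval has a unique color, so the coloring is not even conflict-free. The ``do nothing'' branch has the same flaw (e.g.\ the unimax coloring $4,2,0,3,0,1,5$: deleting the point of color $3$ triggers no recoloring, yet again leaves the two color-$0$ points tied), and recoloring $q_r$ instead of $q_\ell$ does not help, since the tie sits at color $0$, strictly below both.

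The structural lesson, which is how the paper's proof works, is that the one point you recolor must lie in \emph{every} interval that straddles $p$'s old position, and this forces it to be an \emph{immediate neighbor} of $p$, not the one-sided maximum. The paper's rule is: if both neighbors of $p$ have color greater than $c$, do nothing; otherwise recolor a neighbor of color smaller than $c$ to $c$. Correctness then follows by comparing any interval $I$ of surviving points with $I\cup\{p\}$, which was an interval before the deletion: if its old maximum exceeded $c$, that maximum point is untouched and still unique (the recoloring only raises a color to $c$); if its old maximum was $c$, attained by $p$, then either $I$ contains the recolored neighbor, which is now the unique point of color $c$ in $I$ (all other points of $I$ kept colors below $c$), or $I$ lies entirely on one side of $p$, in which case it was already an interval before the deletion and none of its colors changed. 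Your bookkeeping paragraph and the color bound are fine; it is only the choice of which point to promote, and the trigger condition $\col(q_\ell)=\col(q_r)$, that must be replaced.
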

\begin{proof}
We start with a static unimax coloring of points with respect
to intervals using~$\lceil \log n_0 \rceil$ colors~\cite{S-survey-10}.
Recoloring after deleting a point~$p$ with color~$i$ is done as follows.
If both neighbors of $p$ have a higher color then we do nothing, otherwise
we pick a neighbor with color smaller than~$i$ and recolor it to~$i$.
To prove the coloring stays unimax we only need to consider intervals~$I$ containing a 
neighbor of~$p$; for other intervals nothing changed. Now consider~$I\cup\{p\}$.
If before the deletion the maximum color was larger than~$i$ then
that color is still present and unique. Otherwise~$i$ was the unique
maximum color. Now either $I$ contains a neighbor of $p$ that was recolored
to~$i$, or no point in $I$ was recolored; in both cases the maximum color in~$I$
is unique.
\end{proof}

\ifShort
\else 
\emph{Remark.} that we can also get a fully dynamic solution using~$O(\log n)$ 
colors and~$O(\log n)$ recolorings per insertion or deletion, by storing
the points in a red-black tree and coloring them with their height in the tree.
\medskip

\fi
We now explain how to color points in the plane with respect to
rectangles. Let~$\objects$ be a set of points and~$\family$
be the family of all rectangles in the plane. The following
lemma shows how to perform weak deletions.

\begin{lemma}
There is a conflict free coloring of~$n$ points with respect
to rectangles using~$O(\sqrt{n}\log n)$ colors that allows weak
deletions at the cost of one recoloring per deletion.
\end{lemma}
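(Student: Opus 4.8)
The plan is to construct a static conflict-free coloring of $n$ points with respect to rectangles using $O(\sqrt{n}\log n)$ colors, and then show it supports weak deletions. The natural idea is to reduce the two-dimensional problem to the one-dimensional setting handled by Lemma~\ref{lemma:pwrtint}, exploiting the separability of rectangles into an $x$-constraint and a $y$-constraint. First I would sort the points by $x$-coordinate and partition them into $\lceil\sqrt{n}\rceil$ consecutive \emph{vertical slabs}, each containing $O(\sqrt{n})$ points, and symmetrically into $\lceil\sqrt{n}\rceil$ \emph{horizontal slabs} by $y$-coordinate. A rectangle $R=[a,b]\times[c,d]$ either is contained in the union of points of a small number of slabs, or it fully spans some slab in one of the two directions. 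The key structural observation is that if $R$ spans a vertical slab horizontally (i.e.\ $[a,b]$ contains the $x$-extent of the slab), then inside that slab $R$ behaves like a horizontal interval acting on the $y$-coordinates of the slab's points; symmetrically for horizontal slabs.

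The coloring then has two components. For each of the $O(\sqrt{n})$ vertical slabs I would use a fresh palette and color the points inside it by their $y$-coordinate using the one-dimensional unimax coloring of Lemma~\ref{lemma:pwrtint}, costing $O(\log n)$ colors per slab; symmetrically, for each horizontal slab I color its points by $x$-coordinate with another $O(\log n)$ colors from a disjoint palette. Since the slabs in each direction are pairwise disjoint in their spanning coordinate, rectangles that fully span a slab see that slab's coloring as a genuine one-dimensional instance, so a spanned slab contributes a point of unique color. To handle rectangles that span \emph{no} slab in either direction---these are ``small'' rectangles touching only $O(1)$ slabs in each direction---I would give every point a third coordinate of color identifying which $(\text{vertical slab},\text{horizontal slab})$ cell it lies in; a small rectangle then meets only $O(1)$ cells, and within a single cell I can afford to give all $O(\sqrt{n})$ points distinct colors, since there are only $\sqrt{n}\cdot\sqrt{n}=n$ cells reusing a bounded palette per cell. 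Combining the at most $O(\sqrt{n}\cdot\log n)$ slab colors with the cell-based colors yields the claimed $O(\sqrt{n}\log n)$ total, after checking that every nonempty rectangle falls into one of the three cases (spans a vertical slab, spans a horizontal slab, or is small) and hence contains a uniquely colored point.

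For the weak-deletion guarantee the plan is to delete a point locally within the single vertical slab and single horizontal slab containing it. In each of these two slabs the point participates in a one-dimensional unimax coloring, so by Lemma~\ref{lemma:pwrtint} the deletion can be repaired with one recoloring in the vertical slab and one in the horizontal slab, without ever increasing the number of colors used in that slab's palette; the cell-based colors are unaffected since removing a point only shrinks a cell. Thus a deletion costs $O(1)$ recolorings total and the color count never exceeds the initial $O(\sqrt{n}\log n)$, which is exactly the weak-deletion property required. The main obstacle I anticipate is the careful case analysis verifying conflict-freeness for rectangles that straddle the boundary between ``spanning'' and ``small'': one must argue that every rectangle with a nonempty point set either fully spans at least one slab (and is served by the one-dimensional coloring there) or is confined to $O(1)$ cells (and is served by the distinct in-cell colors), and that these colors live in disjoint palettes so uniqueness in the sub-instance lifts to global uniqueness.
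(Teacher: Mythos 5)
Your construction breaks down at the color-accounting step, and the failure is concrete: as described, your scheme is just the trivial coloring with $n$ colors. Every point lies in one vertical slab, one horizontal slab, and one cell, so every point must carry all three colors simultaneously, i.e., its color is a tuple (vertical-slab color, horizontal-slab color, in-cell color). Now take any two points: if they lie in the same cell they get distinct in-cell colors; if they lie in different cells they differ in either the vertical-slab or the horizontal-slab coordinate, because those palettes are fresh per slab. Hence all $n$ points receive pairwise distinct tuples, and the bound $O(\sqrt{n}\log n)$ fails. The root cause is that you combine the palettes additively (``combining \dots yields the claimed $O(\sqrt{n}\log n)$ total''), but additive combination is only legitimate when the point set is \emph{partitioned} into groups, each group colored from its own palette, and each group on its own is conflict-free against \emph{every} rectangle that touches it; then a color unique within a group is unique globally. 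Your slabs do not have this property: a vertical slab's $y$-coloring is conflict-free only against rectangles that span the slab horizontally, which is exactly why you need the other two schemes, and why the three overlapping schemes can only be merged as a product coloring whose size is the product, not the sum, of the component palettes. (Two smaller issues: reusing one in-cell palette across all cells lets a small rectangle contain two points in adjacent cells with equal in-cell colors, fixable with four palette copies chosen by parity of the slab indices; and your deletion costs two recolorings rather than one, which is harmless for the application but does not match the statement.)

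The paper's proof sidesteps this by choosing a partition whose parts \emph{are} individually conflict-free against all rectangles: by Dilworth's theorem the $n$ points can be partitioned into at most $\sqrt{n}$ monotone subsets (chains and antichains in the dominance order). The trace of any axis-parallel rectangle on a monotone sequence is a contiguous run of that sequence---if a rectangle contains two points of the sequence, it contains all points between them---so each subset, viewed on its own, is exactly a one-dimensional points-with-respect-to-intervals instance for every rectangle, not just for special ones. Applying Lemma~\ref{lemma:pwrtint} to each subset with pairwise disjoint palettes then gives $\sqrt{n}\cdot O(\log n)$ colors additively, and a weak deletion is a single recoloring inside the deleted point's subset. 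If you want to rescue a slab-style construction, you would need each part of your partition to be a standalone conflict-free instance for all rectangles; the monotone subsets are precisely the device that achieves this.
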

\begin{proof}
We first partition the point set into at most~$\sqrt{n}$ subsets
such that each set is monotone using Dilworth's theorem \cite{dilworth-thm}.
Then, each point set behaves exactly as points with respect to
intervals in one dimension. Indeed, if a rectangle contains two
points, since the sequence of points is monotone, it also
contains all the points in between. We can then apply
Lemma~\ref{lemma:pwrtint} to finish the proof.
\end{proof}

\ifShort \else
We can directly conclude the following corollary. \fi

\begin{corollary}
Let~$\objects$ be a set of points in the plane and
$\family$ be a family of rectangles.
Then we can maintain a CF-coloring on~$\objects$
under insertions and deletions such that the number of used colors
is $O(\sqrt{n}\log^2 n)$ and the number of recolorings per
insertion is $O(\log n)$ and~$O(1)$ per deletion,
where~$n$ is the current number of objects in~$\objects$.
\end{corollary}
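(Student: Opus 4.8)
The plan is to apply Theorem~\ref{thm:g_i+d} directly, feeding it the coloring from the lemma immediately above as the required static unimax coloring with weak deletions. That lemma produces, for any set of $n$ points, a coloring with respect to rectangles that uses $\cf(n)=O(\sqrt{n}\log n)$ colors and supports weak deletions at the cost of $r(n)=O(1)$ recolorings each. Both $\cf(n)$ and $r(n)$ are non-decreasing, so the hypotheses of Theorem~\ref{thm:g_i+d} are satisfied, and the theorem yields a fully-dynamic CF-coloring with $O\!\left(\sum_{i=0}^{k}\cf(2^i)\log n\right)$ colors, $O(\log n)$ recolorings per insertion, and $O(r(8n)+1)$ recolorings per deletion, where $k=\Theta(\log n)$.

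The one point that genuinely needs checking is that the coloring of the lemma above is \emph{unimax}, since that is what Theorem~\ref{thm:g_i+d} requires, whereas the lemma is phrased only as conflict-free. First I would revisit its construction: the points are split by Dilworth's theorem into at most $\sqrt{n}$ monotone chains, and each chain is colored via Lemma~\ref{lemma:pwrtint}, which produces a \emph{unimax} coloring of points with respect to intervals. Giving each chain its own disjoint block of $O(\log n)$ colors (which is exactly what yields the $O(\sqrt{n}\log n)$ total), I would argue as follows: for any rectangle the points it contains restrict, chain by chain, to contiguous runs that behave exactly like intervals, so within each chain the maximum color is unique; because the blocks are disjoint, the globally largest color appearing in the rectangle lies in a single chain and is therefore unique. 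Hence the combined coloring is unimax, and a weak deletion (one recoloring, confined to the chain of the deleted point) preserves this property.

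It then remains to evaluate the bounds. With $\cf(2^i)=O\!\left(2^{i/2}\,i\right)$ and $k=\Theta(\log n)$, the consecutive ratio of terms in $\sum_{i=0}^{k}\cf(2^i)$ tends to $\sqrt{2}>1$, so the sum is dominated by its last term and equals $O\!\left(2^{k/2}k\right)=O(\sqrt{n}\log n)$, using $2^{k}=\Theta(n)$. Multiplying by the outer $\log n$ factor of Theorem~\ref{thm:g_i+d} gives $O(\sqrt{n}\log^2 n)$ colors. The insertion bound $O(\log n)$ transfers verbatim, and since $r(8n)=O(1)$ the deletion bound $O(r(8n)+1)$ is $O(1)$, matching the statement exactly.

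The main obstacle is the unimax verification in the second paragraph rather than any calculation: Theorem~\ref{thm:g_i+d} leans crucially on the static coloring being unimax (its migration recolors objects in decreasing order of color and appeals to the unimax fact), so one must confirm that the Dilworth-plus-intervals construction has a \emph{unique global maximum} color in every rectangle and that single-recoloring weak deletions keep it so. Once that is settled, the corollary is a routine substitution into Theorem~\ref{thm:g_i+d} together with the geometric-sum estimate above.
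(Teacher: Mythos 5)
Your proposal is correct and follows essentially the same route as the paper: a Dilworth decomposition into at most $\sqrt{n}$ monotone chains, each given its own color block and unimax-colored via Lemma~\ref{lemma:pwrtint}, then fed into Theorem~\ref{thm:g_i+d} with $\cf(n)=O(\sqrt{n}\log n)$ and $r(n)=O(1)$, followed by the same geometric-sum estimate. Your explicit verification that the combined coloring is unimax (not merely conflict-free) in fact patches a point the paper glosses over, since its lemma is stated only as a CF-coloring while Theorem~\ref{thm:g_i+d} requires a unimax coloring admitting weak deletions.
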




\section{Concluding Remarks}
We studied the maintenance of a CF-coloring under insertions and \ifShort deletions,
\else deletions of objects,\fi presenting the
first fully-dynamic solution for objects in $\Reals^2$.
We showed how to maintain a CF-coloring for unit
squares and for bounded-size rectangles, with $O(\log n)$ resp.~$O(\log^2 n)$
colors and $O(\log n)$ recolorings per update.
The method extends to arbitrary rectangles with coordinates 
from a fixed universe of size $N$, yielding $O(\log^2 N \log^2 n)$
colors and $O(\log n)$ recolorings per update.
We also presented general techniques for the semi-dynamic 
(insertions-only) and the fully-dynamic case (insertions and deletions).
Our insertions-only technique can be applied to  
objects with near-linear
union complexity, giving for instance a CF-coloring
of~$O(\log^3 n)$ colors
for pseudodisks using $O(\log n)$ recolorings per update.
\ifShort \else 
This is the first results on semi-dynamic CF-colorings for this
general class of objects. 
\fi
Our fully dynamic solution applies to any
class of object on which weak deletions are possible,
giving for instanct a CF-coloring of~$O(\sqrt{n}\log^2 n)$
colors for points with respect to rectangles at the cost of~$O(\log n)$
recolorings per insertion and~$O(1)$ recolorings per deletion.
\ifShort \else
This constitutes the first fully-dynamic CF-coloring for objects in the plane.
\fi





\end{document}
